\numberwithin{equation}{section}
\theoremstyle{plain}
\newtheorem{lem}{Lemma}[section]
\newtheorem{thm}[lem]{Theorem}
\theoremstyle{definition}
\newtheorem{defn}{Definition}[section]
\newtheorem{rem}{Remark}[section]
\DeclareMathOperator{\diag}{diag}
\newcommand{\ddp}[2]{\frac{\partial#1}{\partial#2}}
\newcommand{\p}{\partial}
\newcommand{\ds}{\displaystyle}
\newcommand{\RR}{\mathbb{R}}
\newcommand{\rmr}{\mathrm{r}}
\newcommand{\nm}{\noalign{\smallskip}}
\newcommand{\Scal}{\mathcal{S}}
\newcommand{\Kcal}{\mathcal{K}}
\renewcommand{\(}{\left(}
\renewcommand{\)}{\right)}
\newcommand{\norm}[1]{\left\Vert#1\right\Vert}
\def \i{\ensuremath{\mathrm{i}}}
\def \d{\ensuremath{\mathrm{d}}}
\begin{document}

\title{Mathematical theory on multi-layer high contrast acoustic subwavelength resonators}
\author{
{Youjun Deng}
\thanks{School of Mathematics and Statistics, Central South University, Changsha, 410083, Hunan Province, China. \ \ Email: youjundeng@csu.edu.cn; dengyijun\_001@163.com}
\and
{Lingzheng Kong}
\thanks{Corresponding author. School of Mathematics and Statistics, Central South University, Changsha, 410083, Hunan Province, China. Email: math\_klz@csu.edu.cn; math\_klz@163.com}
\and
{Hongjie Li}
\thanks{Yau Mathematical Sciences Center, Tsinghua University, Beijing, China. \ \  Email: hongjieli@tsinghua.edu.cn; hongjie\_li@yeah.net.}
\and
{Hongyu Liu}
\thanks{Department of Mathematics, City University of Hong Kong, Hong Kong SAR, China. \ \ Email: hongyliu@cityu.edu.hk}	
\and
{Liyan Zhu}
\thanks{School of Mathematics and Statistics, Central South University, Changsha, 410083, Hunan Province, China.\ \ Email: math\_zly@csu.edu.cn}
}
\date{}%
\maketitle
\begin{abstract}
Subwavelength resonance is a vital acoustic phenomenon in contrasting media. The narrow bandgap width of single-layer resonator has prompted the exploration of multi-layer metamaterials as an effective alternative,
which consist of alternating nests of high-contrast materials, called ``resonators'', and a background media.
In this paper, we develop a general mathematical framework for studying acoustics within multi-layer high-contrast structures. Firstly, by using layer potential techniques, we establish the representation formula in terms of a matrix type operator with a block tridiagonal form  for  multi-layer structures within general geometry. Then we prove the existence of subwavelength resonances via Gohberg-Sigal theory, which generalizes the celebrated Minnaert resonances in single-layer structures. Intriguingly, we find that the primary contribution to mode splitting lies in the fact that as the number of nested resonators increases, the degree of the corresponding characteristic polynomial also increases, while the type of resonance (consists solely of monopolar resonances) remains unchanged. Furthermore, we derive original formulas for the subwavelength resonance frequencies of concentric dual-resonator.  Numerical results associated with different nested resonators are presented to corroborate the theoretical findings. 
\end{abstract}

{\bf Key words}:    Multi-layer structures; Subwavelength resonance; Mode splitting;  Layer potentials; Acoustic waves

{\bf 2020 Mathematics Subject Classification:}~~ 35R30; 35J05; 35B30

\section{Introduction}
In recent decades, plasmonic resonant structures have been extensively studied and utilized as building blocks to make novel optical and acoustic devices \cite{DLZJMPA21,PRHN2003SCI,DLbook2024,YA_PNAS19,YA_SIAMREV18,LLPRSA18, ALLZ526}. These plasmonic materials are typically composed of noble metals, which may exhibit negative properties under specific conditions. It has been mathematically demonstrated that the plasmonic resonance can be formulated  as an eigenvalue problem of the Neumann-Poincar\'e operator \cite{Ammari2013,KKLSY_JLMS2016,JKMA23,BZ_RMI2019,Ammari2016,BLLW_ESAIM2020,ACLJFA23, DLL242, L1272}. However, metallic structures inherently  suffer from high losses and dissipation due to heating, which severely limits the efficiency and functionality of plasmonic devices. This limitation motivates the exploration of alternatives to metallic subwavelength resonators.

Recent developments in microscale acoustic physics have led to a new branch of  phononic crystal focused on the manipulation of acoustically induced subwavelength resonances in high-contrast resonators.
Resonant high-contrast microstructures form new building blocks, which can be used to realize negative materials through homogenization theory in specific configurations \cite{AmmariSIMA17, AFLYZQAM19}. In particular, when bubbles are embedded in liquids, even a small volume fraction of bubbles can significantly impact the velocity of waves in liquids \cite{CAJASA1989}. This phenomenon is mainly due to the high density contrast between the bubbles and the background liquid, causing the bubbles to oscillate vigorously. At a specific low frequency called the Minnaert resonant frequency, the bubbles can act as acoustic resonators \cite{Min_1933}. While bubbly media composed of air bubbles in water exhibit intriguing properties for the creation of subwavelength metamaterials \cite{AD_book2024,AFLYZ_JDE2017,AmmariSIMA17}, such structures tend to be highly unstable \cite{LVAPL09}. Various strategies have been proposed to stabilize these structures. One approach involves encapsulating the bubbles in a thin coating (e.g., albumin, polymer, or lipid) with the aim of mitigating the rapid dissolution and coalescence of the bubbles \cite{DB_IEEE2011}. Another approach is to substitute the background medium, water, with a soft elastic material. It has been demonstrated that this technique yields metamaterials with properties analogous to those of air bubbles in water \cite{LSPS_JASM2008}. Recently, several mathematical theories have been developed to enhance the understanding of the Minnaert resonance of bubbles. Due to the high density contrast between air bubbles and the liquid or elastic medium, the authors in \cite{AFGLZ_AIHPCAN, AFHLY_JDE2019, LLZSIAM2022} have conducted rigorous and systematic mathematical studies of Minnaert resonance for single bubbles encapsulated in thin coatings, immersed in liquids, and in soft elastic materials, respectively. Furthermore, when hard inclusions, such as rubber-coated epoxy spheres \cite{ZH_PRB2009} or steel-coated soft silicone rubber \cite{LWSZ_NM2011}, are embedded in soft elastic materials, dipolar resonance is induced within the subwavelength regime. This phenomenon has been experimentally realized \cite{LZScience} and mathematically derived for the first time in \cite{LZArxiv, LL7527}. As mention above, one significant application of subwavelength resonance and contrasting material structures is the effective realization of various metamaterials with negative material properties. Based on this realization, a class of phononic crystals, made from  periodic arrangements of separated subwavelength resonators\cite{AFHLY_SIMA2020, AFLYZ_JDE2017, AK_book2018}, that exhibits bandgaps and has been employed in advanced techniques for manipulating wave propagation at subwavelength scales.

In most studies on phononic crystals, the structures are typically composed of single-layer (homogeneous) resonators. Due to their narrow bandgap width and poor wave filtering performance, such configurations are not readily applicable in practical engineering contexts \cite{PVDD_SCR2010}. This limitation has prompted the design and investigation of metamaterials with wider bandgaps. In particular, multi-layer high-contrast metamaterials have emerged as a popular choice for subwavelength resonators, owing to their high tunability and high quality resonance. Experimental and numerical observations \cite{LPDV_PRE2007,CMJW_SV2018,KMKG_JVA2017,SPMW_AA2023} indicate that multi-layer concentric radial resonators, at the subwavelength regime, can open multiple local resonance bandgaps. Meanwhile, by appropriately combining multiple multi-layer concentric radial structures, it is possible to overlap sharp dips (strong field concentration) and create a larger acoustic stop band. However, despite considerable evidence in the engineering and physics literature, the mathematical understanding of the origin of subwavelength resonance in multi-layer contrasting media and the mechanism underlying mode splitting (the separation of subwavelength resonant frequencies) remains limited, with no quantitative results available even for concentric dual-resonator. This prompts us to demonstrate the opening of multiple bandgaps in multi-layer subwavelength resonators. We consider multi-layer metamaterials characterized by a nested structure similar to Russian folk art dolls called Matryoshka dolls and exploit their subwavelength resonance. The number of layers can be arbitrary and the material parameters in each layer may be different, though uniform. High density contrast is crucial for achieving resonance at subwavelength scales. The wave propagation in the multi-layer structure is modeled by a high-contrast Helmholtz problem. It is noteworthy that, owing to advancements in 3D printing techniques \cite{MBKPD_PNAS2016, CPBE2018}, the compositional structure of high-contrast materials has become increasingly diverse (e.g., resin materials \cite{DSWYZ_APL2021}, polymer materials \cite{SPMW_AA2023}, etc.), extending beyond the previously mentioned Minnaert cavities and associated stabilization strategies. Such multi-layer high contrast materials arise naturally when designing subwavelength metamaterials, however, due to the complex structure of multi-layer acoustic metamaterial systems, fully determining and rigorously demonstrating the mechanisms behind their acoustic properties is a rather challenging task.

To establish the primary conclusion of this paper, we first make use of the layer potential techniques to reduce the acoustic scattering problem into a system of integral matrix type operator $\mathcal{A}(\omega,\delta)$ (see \eqref{contrastintegralsystem}--\eqref{contrastinteract}) having the block tridiagonal form in the $N$-layer structure with $C^{1,\eta}\, (0<\eta<1)$ smooth interface. Secondly, by using the asymptotic perturbation and Gohberg-Sigal theory\cite{AK_book2018},  we demonstrate that $\mathcal{A}(\omega(\delta),\delta)$ has $2N_\rmr$ characteristic values that are symmetric about the imaginary axis, with $\omega(\delta)$  depending  continuously on $\delta$ and $\omega(\delta)\to 0$ as the material contrast $\delta\to 0$. In fact, $N_\rmr:=\left\lfloor (N+1)/2\right\rfloor$ here represents the number of resonator elements in the $N$-layer structure. In other words, the number of bandgap increases with the number of resonator-nested. It is worth noting that the primary reason for mode splitting lies in the fact that as the number of nested resonators increases, the degree of the corresponding characteristic polynomial also increases, while the type of resonance (which consists solely of monopolar resonances) remains unchanged. It is known that the resonant frequency is associated with the shape of the resonators \cite{AFGLZ_AIHPCAN}. However, the rotational symmetry breaking of the resonators does not lead to mode splitting. For this, based on Fourier series, we present an exact matrix representation of $\mathcal{A}(\omega,\delta)$ in  multi-layer concentric balls. By highly intricate and delicate analysis, we derive exact and original formulas for the resonant frequency of  concentric balls with layers $2\le N\le 4$. For structures with a large number of layers, we shall provide  numerical computations of resonant modes.
In practical applications, the multi-layer high contrast  structure can serve as a fundamental building block for various material devices. Our analysis will provide a powerful and general design principle that can be applied to select appropriate material parameters, both qualitatively and quantitatively, guiding the design of resonator-nested  and predicting their resonant properties. By adjusting the geometries and parameters of the materials, one can fine-tune the resonant frequencies to target specific acoustic applications. We shall investigate along this direction in our forthcoming work.

The remainder of this paper is organized as follows. In Section \ref{sec2}, we first present some preliminary knowledge on boundary layer potentials and then establish the representation formula of the solution of the acoustic scattering problem with multi-layer structures. Section \ref{sec3} is devoted to the study of subwavelength resonance for multi-layer high contrast metamaterials by using the Gohberg-Sigal theory. In Section \ref{sec4}, the exact  formulas of the resonant frequencies  for single-resonator, dual-resonator models is given. In section \ref{sec5}, numerical computations are presented in finding all the resonance modes for fixed structures with a large number of layers. Moreover, the strong field concentration is observed. Some concluding remarks are made in Section \ref{sec6}.

\section{Preliminaries}\label{sec2}
\subsection{Layer potentials}\label{subsec21}

Our study of subwavelength resonance within the Helmholtz system relies heavily on layer potential theory. Thus, we briefly introduce the boundary layer potential operators and associated properties \cite{CK_book}.

Let $G_k$ denote the outgoing fundamental solution to the PDO $ \Delta+k^2$ in $\mathbb{R}^3$, defined as
\begin{equation}\label{fundamentalk}
G_k(x)= - \frac{e^{\i k|x|}}{4 \pi|x|}.
\end{equation}
Let $D$ be a bounded domain with a $C^{1,\eta}\, (0<\eta<1)$ boundary $\Gamma$. The single layer potential $\mathcal{S}_{\Gamma}^{k}$ associated with wavenumber $k$ is defined by
 \begin{equation}\label{eq:sh}
 \mathcal{S}_{{\Gamma}}^{k} [\phi](x) =  \int_{{\Gamma}} G_k(x- y) \phi(y) ~\mathrm{d}\sigma(y),  \quad x \in  \mathbb{R}^3,
 \end{equation}
where $\phi\in L^2(\Gamma)$ is the density function. There hold the following jump relations on the surface \cite{Ammari2007}
\begin{equation} \label{singlejumpk}
\frac{\p}{\p\nu}\Scal^k_{\Gamma} [\phi] \Big|_{\pm} = \(\pm \frac{1}{2}I+
\Kcal_{{\Gamma}}^{k,*}\)[\phi] \quad \mbox{on } {\Gamma},
\end{equation}
where the subscripts $+$ and $-$ denote evaluation from outside and inside the boundary $\Gamma$, respectively. In \eqref{singlejumpk}, the operator $\Kcal_{\Gamma}^{k,*}$ is called the Neumann-Poincar\'e (NP) operator defined by
\[
\Kcal_{\Gamma}^{k,*}[\phi]
(x) = \mbox{p.v.}\;\int_{{\Gamma}}\frac{\p G_k(x-y)}{\p \nu_x}\phi(y)~\mathrm{d} \sigma(y),\quad x \in  \Gamma,
\]
where p.v. stands for the Cauchy principle value. In what follows, we denote by $\Scal_{\Gamma}$ and $\Kcal_{\Gamma}^{*}$ be the single-layer and Neumann-Poincar\'e operators $\Scal_{\Gamma}^k$ and $\Kcal_{\Gamma}^{k,*}$, by formally taking $k =0$ respectively.

Since we are interested in low-frequency regime, we will use the following asymptotic expansion \cite{AFGLZ_AIHPCAN}:
\begin{equation} \label{series-s}
\mathcal{S}_{\Gamma}^{k}=  \mathcal{S}_{\Gamma} + \sum_{j=1}^{\infty} k^j \mathcal{S}_{\Gamma, j},
\end{equation}
where
\[
\mathcal{S}_{\Gamma, j} [\phi](x) = - \frac{\i}{4 \pi} \int_{\Gamma} \frac{ (\i |x-y|)^{j-1}}{j! } \phi(y)~\d\sigma(y).
\]
In particular, we have
\begin{equation}
\mathcal{S}_{\Gamma, 1} [\phi](x) = - \frac{\i}{4 \pi}  \int_{\Gamma}  \phi(y)~\d\sigma(y).
\end{equation}
It is well known that $\mathcal{S}_{\Gamma}:
L^2(\Gamma) \rightarrow H^1(\Gamma)$ is invertible \cite{Ammari2007}.

Similarly, the NP operator $\mathcal{K}_{\Gamma}^{k, *}$ has the following asymptotic expansion
\begin{equation} \label{series_K}
\mathcal{K}_{\Gamma}^{k,*}  = \mathcal{K}_\Gamma^* + \sum_{j=1}^{\infty} k^j \mathcal{K}_{\Gamma, j},
\end{equation}
where
$$
\mathcal{K}_{\Gamma, j}[\phi](x) = - \frac{\i}{4 \pi} \int_{\Gamma} \frac{ \p (\i|x-y|)^{j-1}}{j! \p \nu(x)} \phi(y) d\sigma(y)=
- \frac{\i^j (j-1)}{4 \pi j!} \int_{\Gamma} |x-y|^{j-3} (x-y)\cdot\nu(x) \phi(y) ~\d\sigma(y).
$$
In particular, we have
\begin{align}
\mathcal{K}_{\Gamma, 1} &=0,\\
\mathcal{K}_{\Gamma, 2}[\phi](x) &= \frac{1}{8\pi} \int_{\Gamma} \frac{(x-y)\cdot \nu(x)}{|x-y|} \phi(y)~\d\sigma(y),
\end{align}
\begin{equation}\label{rank2}
\mathcal{K}_{\Gamma, 2}^*[1] (x)= \frac{1}{8\pi}\int_{\Gamma} \frac{(y-x)\cdot \nu(y)}{|y-x|}~\d\sigma(y)
= \frac{1}{8\pi}\int_{D} \nabla \cdot \frac{y-x}{|y-x|}~\d y =  \frac{1}{4\pi}\int_{D} \frac{1}{|y-x|}~\d y.
\end{equation}

\begin{lem}[see \cite{AK_book2018}] \label{lem-appendix11}
	The norms
	$\| \mathcal{S}_{\Gamma, j} \|_{
		\mathcal{L}(L^2(\Gamma), H^1(\Gamma))}$
	and $\| \mathcal{K}_{\Gamma, j} \|_{ \mathcal{L}(L^2(\Gamma))}$
	 are uniformly bounded with respect to $j$. Moreover, the series in (\ref{series-s}) and \eqref{series_K} are convergent in
	$\mathcal{B}(L^2(\Gamma), H^1(\Gamma))$ and $\mathcal{B}(L^2(\Gamma),L^2(\Gamma))$, respectively.
\end{lem}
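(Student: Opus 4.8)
The plan is to exploit the fact that, unlike the Newtonian operators $\mathcal{S}_\Gamma$ and $\mathcal{K}_\Gamma^*$ themselves, every tail operator $\mathcal{S}_{\Gamma,j}$ and $\mathcal{K}_{\Gamma,j}$ with $j\ge1$ has a \emph{bounded}, non-singular integral kernel whose sup-norm decays factorially in $j$. Granted this, both assertions reduce to elementary estimates for integral operators on the compact surface $\Gamma$, whose surface measure $|\Gamma|$ is finite.

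First I would record the pointwise kernel bounds. Set $R:=\operatorname{diam}(\Gamma)$ and $\tilde R:=\max\{1,R\}$. The operator $\mathcal{S}_{\Gamma,j}$ has kernel $-\tfrac{\i}{4\pi}\tfrac{(\i|x-y|)^{j-1}}{j!}$, of modulus at most $\tfrac1{4\pi}\tfrac{\tilde R^{\,j-1}}{j!}$; moreover $x\mapsto|x-y|^{j-1}$ is constant for $j=1$ and Lipschitz in $x$ uniformly in $y\in\Gamma$ otherwise, so one may differentiate under the integral sign and bound $\big|\nabla_x\tfrac{(\i|x-y|)^{j-1}}{j!}\big|=\tfrac{j-1}{j!}|x-y|^{j-2}\le\tfrac{j-1}{j!}\tilde R^{\,j-2}$ for $j\ge2$ (this vanishes for $j=1$). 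Likewise, using $|(x-y)\cdot\nu(x)|\le|x-y|$, the kernel of $\mathcal{K}_{\Gamma,j}$ (which is $0$ for $j=1$ by the computation in the excerpt) has modulus at most $\tfrac{j-1}{4\pi j!}|x-y|^{j-2}\le\tfrac{j-1}{4\pi j!}\tilde R^{\,j-2}$ for $j\ge2$. Since an integral operator $\phi\mapsto\int_\Gamma K(\cdot,y)\phi(y)\,d\sigma(y)$ with bounded kernel has $L^2(\Gamma)\to L^2(\Gamma)$ norm at most $|\Gamma|\,\|K\|_{L^\infty(\Gamma\times\Gamma)}$ (Cauchy--Schwarz, or the Schur test), and since $\|\nabla_\Gamma u\|_{L^2(\Gamma)}\le\|(\nabla u)|_\Gamma\|_{L^2(\Gamma)}$, we obtain
\[
\|\mathcal{S}_{\Gamma,j}\|_{\mathcal{L}(L^2(\Gamma),H^1(\Gamma))}\le \frac{|\Gamma|}{4\pi}\Big(\frac{\tilde R^{\,j-1}}{j!}+\frac{(j-1)\tilde R^{\,j-2}}{j!}\Big),\qquad \|\mathcal{K}_{\Gamma,j}\|_{\mathcal{L}(L^2(\Gamma))}\le\frac{|\Gamma|}{4\pi}\,\frac{(j-1)\tilde R^{\,j-2}}{j!}.
\]
Because the factorial dominates $\tilde R^{\,j}$, the right-hand sides are bounded uniformly in $j$ (in fact they tend to $0$), which is the first assertion.

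For the convergence statement, I would use the same bounds together with $\tfrac{j-1}{j!}=\tfrac1{(j-1)!}-\tfrac1{j!}$ to get, for any fixed $k\in\mathbb{C}$, $\sum_{j\ge1}|k|^j\|\mathcal{S}_{\Gamma,j}\|_{\mathcal{L}(L^2(\Gamma),H^1(\Gamma))}\le C_\Gamma(1+|k|)e^{|k|\tilde R}<\infty$ and $\sum_{j\ge1}|k|^j\|\mathcal{K}_{\Gamma,j}\|_{\mathcal{L}(L^2(\Gamma))}\le C_\Gamma|k|e^{|k|\tilde R}<\infty$. Hence the series $\sum_{j\ge1}k^j\mathcal{S}_{\Gamma,j}$ and $\sum_{j\ge1}k^j\mathcal{K}_{\Gamma,j}$ converge absolutely, and therefore in norm since $\mathcal{B}(L^2(\Gamma),H^1(\Gamma))$ and $\mathcal{B}(L^2(\Gamma))$ are Banach spaces; restoring the $j=0$ terms $\mathcal{S}_\Gamma$ and $\mathcal{K}_\Gamma^*$ yields the stated convergence of \eqref{series-s} and \eqref{series_K} (indeed for every $k$).

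The only step that needs care is the $H^1(\Gamma)$ half of the first estimate: one must justify differentiating $\mathcal{S}_{\Gamma,j}[\phi]$ under the integral sign and then control the surface gradient pointwise. This is precisely where the bounded, at worst Lipschitz kernel of the tail terms is essential — in contrast to the Newtonian kernel of $\mathcal{S}_\Gamma$ itself, whose $L^2(\Gamma)\to H^1(\Gamma)$ mapping property is a genuinely singular-integral fact quoted separately in the excerpt and not part of this lemma. Everything else is bookkeeping.
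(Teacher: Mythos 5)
Your argument is correct, and it is essentially the standard proof: the paper itself gives no proof of this lemma (it is quoted from \cite{AK_book2018}), and the cited source establishes it in just this way, by bounding the non-singular kernels of $\mathcal{S}_{\Gamma,j}$, $\nabla\mathcal{S}_{\Gamma,j}$ and $\mathcal{K}_{\Gamma,j}$ pointwise by $C\,\tilde R^{\,j}/j!$-type quantities on the compact surface and then summing the resulting operator-norm estimates. Your handling of the only delicate point (differentiation under the integral sign for the Lipschitz kernels, then projecting onto the tangential gradient for the $H^1(\Gamma)$ bound) is sound, so no gap remains.
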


\subsection{Acoustics with multi-layer structures}
In this subsection, we first introduce the material configurations for the subsequent analysis. Before studying of high contrast material, let us first focus on regular material of multi-layer structure, defined as follows.
\begin{defn}\label{def21}
	Let $D$ be a domain with a $C^{1,\eta}$ smooth boundary, denoted as $\Gamma_1 := \partial D$, and let $D_0 = \mathbb{R}^3 \setminus \overline{D}$. We say that $D$ has a partition with a multi-layer structure if its interior is divided by closed, non-intersecting, and well-separated $C^{1,\eta}$ surfaces $\Gamma_j$ $(j = 2, 3, \dots, N)$ into subsets (layers) $D_j$ $(j = 1, 2, \dots, N)$. Each surface $\Gamma_{j-1}$ surrounds the surface $\Gamma_j$ $(j = 2, 3, \dots, N)$. The region $D_j$ $(j = 0, 1, 2, \dots, N)$ represents homogeneous media.
\end{defn}

For the $N$-layer structure $D$ specified in Definition \ref{def21}, the corresponding physical parameters, i.e. the density and the bulk modulus, are given by
\begin{equation}\label{densityA&AC}
\rho(x)=\rho_c(x)\chi(D)+\rho_0\chi(\mathbb{R}^2\backslash \overline{D}),
\end{equation}
\begin{equation}\label{bulkA&AC}
\kappa(x)=\kappa_c(x)\chi(D)+\kappa_0\chi(\mathbb{R}^2\backslash \overline{D}),
\end{equation}
where $\chi$ stands for the characteristic function of a domain. In the last formula, the parameters $\rho_c$ and $\kappa_c$ enjoy the following form
\begin{equation}\label{conductvt_each_layer}
\rho_c(x)=\rho_j\;\mbox{ and } \kappa_c = \kappa_j, \quad x\in D_j,\quad j=1,2,\ldots,N.
\end{equation}
Moreover, it is assumed that
\begin{equation}\label{layercondition}
\rho_j\neq \rho_{j-1} \;\mbox{ and }\;\kappa_j\neq \kappa_{j-1},\;\mbox{ for } j=1,2,\ldots,N,
\end{equation}
which means that $D$ is of a layered-piecewise constant structure.
We then introduce the auxiliary parameters to facilitate our analysis:
\[
v_j=\sqrt{\frac{\kappa_j}{\rho_j}}\;\mbox{ and } \; k_j=\frac{\omega}{v_j},
\]
which are the wave speeds and wavenumbers in $D_j$, $j=0,1,\ldots,N$, respectively.

In summary, we consider a rather general multi-layer structure in which the number of layers can be arbitrarily given and the material parameters in each layer can be different from one another.

We will consider the scattering of a time-harmonic acoustic wave by the multi-layer structure, described by the following system
\begin{equation} \label{wave_equation}
\begin{cases}
 \ds\nabla\cdot\frac{1}{\rho_j}\nabla u  + \frac{\omega^2}{\kappa_j} u = 0, & \text{in  } D_j, \; k =0,1,\ldots,N, \\
 \nm
\ds u|_+ - u|_- = 0, & \text{on } \Gamma_j, \; j =1,2,\ldots,N,\\
\nm
\ds \frac{1}{\rho_{j-1}} \ddp{u}{\nu_j}|_+ - \frac{1}{\rho_j} \ddp{u}{\nu_j}|_- = 0, & \text{on }\Gamma_j,\; j =1,2,\ldots,N, \\
\nm
\ds u^s := u - u^{in} &  \mbox{satisfies the Sommerfeld radiation condition,}
\end{cases}
\end{equation}
where $u^{in}$ is the incoming wave, and  the notation $\nu_j$ denotes the outward normal on $\Gamma_j$.
By the Sommerfeld radiation condition, the scattered wave $u^s$ satisfies
\begin{equation} \label{eq:src}
\left(\ddp{}{|x|}-\i  k_0\right)u^s=O(|x|^{-2})\quad \mbox{ as }|x|\to\infty.
\end{equation}

With the help of the layer potentials in subsection \ref{subsec21}, the solution to the Helmholtz system \eqref{wave_equation} can be written by
\begin{equation}\label{Helm-solution}
u(x) = \begin{cases}
\ds u^{in} + \mathcal{S}_{\Gamma_1}^{k_0} [\psi_1](x), & \quad x \in D_0,\\
\nm
\ds\mathcal{S}_{{\Gamma_j}}^{k_j} [\phi_j](x)+\mathcal{S}_{{\Gamma_{j+1}}}^{k_j} [\psi_{j+1}](x),  & \quad x \in {D_j},\; j = 1,2,\ldots,N-1,\\
\nm
\ds \mathcal{S}_{{\Gamma_N}}^{k_N} [\phi_N](x) ,  & \quad x \in {D_N},
\end{cases}
\end{equation}
where $\psi_j,\phi_j \in  L^2(\Gamma_j)$, $j =1,2,\ldots,N.$ Using the second and third conditions in \eqref{wave_equation}, and the jump relations for the single layer potentials, we can obtain that $\psi_j$ and $\phi_j$, $j =1,2,\ldots,N,$  satisfy the following system of boundary integral equations:
\begin{equation}\label{integralsystem}
\mathcal{A}[\Psi]=F,
\end{equation}
where
\[
\Psi= (\psi_1,\phi_1,\psi_2,\phi_2,\ldots,\psi_{N},\phi_N)^T,
\,\,F= (u^{in}, t_0 \frac{\partial u^{in}}{\partial \nu_1},0,0,\ldots,0)^T,\; \mbox{ and } t_{i-1} = \frac{\rho_{i}}{\rho_{i-1}}\;\mbox{ for }\;  i = 1,\ldots,N.
\]
The $2N$-by-$2N$ matrix type operator $\mathcal{A}$ has the block tridiagonal form
\begin{equation}\label{layeredintegralsystem}
\begin{split}
\mathcal{A}:=\diag\left(\mathcal{L}_{i,i-1},\mathcal{M}_i, \mathcal{L}_{i,i+1}\right):
&=\begin{pmatrix}
\mathcal{M}_1 & \mathcal{L}_{1,2} & &  & & \\
\mathcal{L}_{2,1} & \mathcal{M}_2 &\mathcal{L}_{2,3}&  & &\\
&\mathcal{L}_{3,2} & \mathcal{M}_3 & \mathcal{L}_{3,4} & &\\
& & \ddots &\ddots & \ddots& \\
&  & & \mathcal{L}_{N-1,N-2} & \mathcal{M}_{N-1} & \mathcal{L}_{N-1,N}\\
&  & &  &\mathcal{L}_{N,N-1} & \mathcal{M}_N
\end{pmatrix}.
\end{split}
\end{equation}
In the last formula, $\mathcal{M}_i$ is the self-interaction for the $i$-th interface  defined by
\begin{equation}\label{selfinteract}
\begin{split}
 \mathcal{M}_i:=\begin{pmatrix}
-\mathcal{S}^{k_{i-1}}_{\Gamma_i} & \mathcal{S}^{k_{i}}_{\Gamma_i} \\
-t_{i-1}( \frac{1}{2}I+ \mathcal{K}_{\Gamma_i}^{k_{i-1}, *}) & -\frac{1}{2}I+ \mathcal{K}_{\Gamma_i}^{k_i, *}
\end{pmatrix},
\end{split}
\end{equation}
and for $|i-j|=1$, $\mathcal{L}_{i,j}$ encodes the effect of the $j$-th interface  to the $i$-th interface as defined respectively by
\begin{equation}\label{aobveinteract}
\begin{split}
\quad \mathcal{L}_{i,i+1}:=\begin{pmatrix}
\ds \mathcal{S}^{k_{i}}_{\Gamma_{i,i+1}}& 0\\
\nm
\ds \mathcal{K}_{\Gamma_{i,i+1}}^{k_i, *}& 0
\end{pmatrix},
\end{split}
\end{equation}
and
\begin{equation}\label{belowinteract}
\begin{split}
\quad \mathcal{L}_{i,i-1}:=\begin{pmatrix}
\ds 0&-\mathcal{S}^{k_{i-1}}_{\Gamma_{i,i-1}}\\
\nm
\ds 0&-t_{i-1}\mathcal{K}_{\Gamma_{i,i-1}}^{k_{i-1}, *}
\end{pmatrix}.
\end{split}
\end{equation}
Here, we introduced the operators $\mathcal{S}^{k}_{\Gamma_{i,j}}:L^2(\Gamma_{j})\to L^2(\Gamma_i)$ and $\mathcal{K}_{\Gamma_{i,j}}^{k, *}:L^2(\Gamma_{j})\to L^2(\Gamma_i)$ defined by
\[
\mathcal{S}^{k}_{\Gamma_{i,j}}[\varphi] = \mathcal{S}^{k}_{\Gamma_{j}}[\varphi] \big|_{\Gamma_i}\;\mbox{ and }\; \mathcal{K}_{\Gamma_{i,j}}^{k, *}[\varphi] = \ddp{}{\nu_i}\mathcal{S}^{k}_{\Gamma_{j}}[\varphi] \big|_{\Gamma_i}, \mbox{ for } \forall \varphi\in L^2(\Gamma_{j}).
\]
For the further discussion, we introduce the spaces $\mathcal{H} =\prod_{j=1}^{N} (L^2(\Gamma_j) \times L^2(\Gamma_j))$ and by $\mathcal{H}_1 =\prod_{j=1}^{N} (H^1(\Gamma_j) \times L^2(\Gamma_j))$,
It is clear that $\mathcal{A}(\omega, \delta)$ is a bounded linear operator from $\mathcal{H}$ to $\mathcal{H}_1$, i.e.
$\mathcal{A}(\omega, \delta) \in \mathcal{B}(\mathcal{H}, \mathcal{H}_1)$.

The following result guarantees the unique solvability of integral system \eqref{integralsystem} and consequently the well-posedness of problems \eqref{wave_equation}.

\begin{thm}\label{well-posedness}
Let $D$ be the $N$-layer configuration defined in Definition \ref{def21} with parameters satisfying \eqref{densityA&AC}--\eqref{layercondition}.
Suppose that $k_j^2$ is not a Dirichlet eigenvalue for $-\Delta$ in $\cup_{i=j+1}^N D_{i}$, $j = 0,1,\ldots,N-1$.  For any function $u^{in}\in H^1(\Gamma_1)$, there exists a unique solution $\Psi\in \mathcal{H}$ to the integral system \eqref{integralsystem}. Moreover, there exists a constant
$C=C(k_1,k_2,\ldots,k_N,D) > 0$ such that
\begin{equation}\label{estimatessol}
\|\Psi\|_{\mathcal{H}}\leq C\(\|u^{in}\|_{L^2(\Gamma_1)}+\|\nabla u^{in}\|_{L^2(\Gamma_1)}\|\).
\end{equation}
\end{thm}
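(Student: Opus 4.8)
The plan is to establish well-posedness of \eqref{integralsystem} by the standard strategy for boundary-integral formulations: write the operator $\mathcal{A}$ as a compact perturbation of an invertible "principal part", invoke the Fredholm alternative to reduce existence to uniqueness, and then prove uniqueness by appealing to the uniqueness of the transmission problem \eqref{wave_equation} together with the injectivity of the single-layer representation under the stated Dirichlet-eigenvalue hypothesis. First I would decompose $\mathcal{A} = \mathcal{A}_0 + \mathcal{A}_c$, where $\mathcal{A}_0$ is built from the block structure \eqref{layeredintegralsystem}--\eqref{belowinteract} but with every $\mathcal{S}^{k_j}_{\Gamma_i}$, $\mathcal{K}^{k_j,*}_{\Gamma_i}$ replaced by its static ($k=0$) counterpart $\mathcal{S}_{\Gamma_i}$, $\mathcal{K}^{*}_{\Gamma_i}$; by the asymptotic expansions \eqref{series-s}, \eqref{series_K} and Lemma~\ref{lem-appendix11}, the difference $\mathcal{A}_c$ collects only the tail terms $\sum_{j\ge 1} k^j(\cdots)$, which are bounded operators $\mathcal{H}\to\mathcal{H}_1$ with a factor of a wavenumber; more importantly, the off-diagonal blocks $\mathcal{L}_{i,i\pm1}$ are already compact from $L^2(\Gamma_j)$ into $H^1(\Gamma_i)$ (indeed smoothing, since $\Gamma_i$ and $\Gamma_j$ are disjoint for $i\ne j$), so $\mathcal{A}$ differs from the block-diagonal operator $\operatorname{diag}(\mathcal{M}_i^{(0)})$ by a compact operator, where $\mathcal{M}_i^{(0)}$ is the static self-interaction block.

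Next I would verify that the block-diagonal principal part is invertible. Each static block $\mathcal{M}_i^{(0)} = \begin{pmatrix} -\mathcal{S}_{\Gamma_i} & \mathcal{S}_{\Gamma_i} \\ -t_{i-1}(\tfrac12 I + \mathcal{K}^*_{\Gamma_i}) & -\tfrac12 I + \mathcal{K}^*_{\Gamma_i}\end{pmatrix}$ can be analyzed by row/column operations: adding the first block-row to an appropriate multiple exploits that $\mathcal{S}_{\Gamma_i}: L^2(\Gamma_i)\to H^1(\Gamma_i)$ is invertible (quoted after \eqref{series-s}) and that $(1+t_{i-1})^{-1}$ exists because $t_{i-1} = \rho_i/\rho_{i-1} > 0$; one reduces $\mathcal{M}_i^{(0)}$ to a triangular form whose diagonal entries are $\mathcal{S}_{\Gamma_i}$ (invertible) and an operator of the form $\tfrac12(1+t_{i-1}) I + (\text{compact})$ after using $\mathcal{K}^*_{\Gamma_i}$ acting between $L^2$ spaces is compact — wait, $\mathcal{K}^*_{\Gamma_i}$ is not compact on general $C^{1,\eta}$ surfaces, so instead I would argue invertibility of $\mathcal{M}_i^{(0)}$ directly from the unique solvability of the single-interface transmission problem with contrast $t_{i-1}\ne 1$, a classical fact; alternatively, reduce to the standard $2\times2$ system whose invertibility on $H^1(\Gamma_i)\times L^2(\Gamma_i)$ is established in \cite{Ammari2007}. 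Either way, $\operatorname{diag}(\mathcal{M}_i^{(0)})$ is invertible $\mathcal{H}\to\mathcal{H}_1$, hence $\mathcal{A}$ is Fredholm of index zero.

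For uniqueness, suppose $\mathcal{A}[\Psi]=0$. Then the function $u$ defined by \eqref{Helm-solution} with $u^{in}=0$ solves the homogeneous transmission problem \eqref{wave_equation} with the Sommerfeld condition. By Rellich's lemma and unique continuation (the standard uniqueness theorem for the exterior transmission problem, valid since the contrasts are nonzero and the coefficients are positive), $u\equiv 0$ in $\mathbb{R}^3$. It remains to back out $\Psi=0$. In $D_0$, $\mathcal{S}^{k_0}_{\Gamma_1}[\psi_1]=0$ on $D_0$ and, since it also solves the Helmholtz equation inside $D_1$ with zero exterior trace, $\mathcal{S}^{k_0}_{\Gamma_1}[\psi_1]$ has zero Cauchy data from outside on $\Gamma_1$; the jump relation \eqref{singlejumpk} then forces $\psi_1 = (\partial_\nu \mathcal{S}^{k_0}_{\Gamma_1}[\psi_1])|_+ - (\partial_\nu \mathcal{S}^{k_0}_{\Gamma_1}[\psi_1])|_- = 0$ once we know the interior solution also vanishes — and here is exactly where the hypothesis "$k_j^2$ is not a Dirichlet eigenvalue for $-\Delta$ in $\cup_{i=j+1}^N D_i$" enters: proceeding inward layer by layer ($j=0,1,\dots,N-1$), the already-established vanishing of $u$ in the outer layers plus this spectral assumption guarantees that each single-layer potential appearing in \eqref{Helm-solution} has vanishing density. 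Thus $\Psi=0$, uniqueness holds, and by Fredholmness existence follows with the quantitative bound \eqref{estimatessol}, where $C$ is the norm of $\mathcal{A}^{-1}$ (finite by the closed graph theorem / open mapping theorem) and depends only on $k_1,\dots,k_N$ and the geometry.

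The main obstacle I anticipate is the bookkeeping in the uniqueness step: carefully peeling the layers from outside in, correctly identifying which Dirichlet problem must be non-degenerate at each stage, and handling the fact that each interior layer $D_j$ carries \emph{two} densities $\phi_j,\psi_{j+1}$ whose separation requires invertibility of the single-layer operator on that layer's bounding surfaces — this is precisely what the eigenvalue hypothesis is tailored to provide. The Fredholm/compactness part is routine given Lemma~\ref{lem-appendix11} and the disjointness of the interfaces.
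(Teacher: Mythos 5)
Your proposal follows essentially the same route as the paper: split $\mathcal{A}$ into an invertible principal part plus a compact remainder, use the Fredholm alternative to reduce existence to injectivity, prove injectivity by showing the represented field $u$ vanishes (Rellich) and then peeling the densities layer by layer from the outside in, and obtain \eqref{estimatessol} from the closed graph theorem. The one structural difference is the choice of principal part: the paper takes the full static operator $\mathcal{A}_0$ (all wavenumbers set to zero, off-diagonal blocks included) and asserts its invertibility, whereas you keep only the block-diagonal static part $\diag(\mathcal{M}_{0,i})$ and push the off-diagonal blocks (which are smoothing, the interfaces being disjoint and well separated) into the compact remainder, proving invertibility blockwise via the classical single-interface transmission reduction to an operator of the form $\frac{1+t_{i-1}}{2}I-(1-t_{i-1})\mathcal{K}^*_{\Gamma_i}$ with $t_{i-1}>0$; this is a valid variant that actually supplies detail the paper leaves implicit (and, incidentally, on $C^{1,\eta}$ surfaces $\mathcal{K}^*_{\Gamma_i}$ \emph{is} compact on $L^2(\Gamma_i)$, so your hesitation there is unnecessary). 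One point in your uniqueness sketch should be corrected: the Dirichlet-eigenvalue hypothesis is not what separates the two densities $\phi_j,\psi_{j+1}$ carried by an interior layer. For $u_{D_j}=\mathcal{S}^{k_j}_{\Gamma_j}[\phi_j]+\mathcal{S}^{k_j}_{\Gamma_{j+1}}[\psi_{j+1}]$, which vanishes in $D_j$ because $u\equiv0$, the assumption that $k_j^2$ is not a Dirichlet eigenvalue of $-\Delta$ in $\cup_{i=j+1}^N D_i$ only forces $u_{D_j}\equiv0$ in the region enclosed by $\Gamma_{j+1}$, and the jump relation \eqref{singlejumpk} then gives $\psi_{j+1}=0$; to kill $\phi_j$ (and likewise $\phi_N$) one must instead continue $u_{D_j}$ into the exterior region $\cup_{i=0}^{j-1}D_i$, where it is a radiating Helmholtz solution with zero trace on $\Gamma_j$, and invoke uniqueness for the exterior Dirichlet problem (the paper cites \cite[Lemma 2.51]{AK_book2018}), after which the jump across $\Gamma_j$ yields $\phi_j=0$. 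With that correction your layer-by-layer peeling closes exactly as in the paper.
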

\begin{proof}[\bf Proof]
	We first define the operator  $\mathcal{A}_0$ by
	\[
	\begin{split}
	\mathcal{A}_0:=\diag\left(\mathcal{L}_{0,i,i-1},\mathcal{M}_{0,i}, \mathcal{L}_{0,i,i+1}\right),
	\end{split}
	\]
	where
	\[
	\begin{split}
	\mathcal{M}_{0,i}:=\begin{pmatrix}
	-\mathcal{S}_{\Gamma_i} & \mathcal{S}_{\Gamma_i} \\
	-t_{i-1}( \frac{1}{2}I+ \mathcal{K}_{\Gamma_i}^{ *}) & -\frac{1}{2}I+ \mathcal{K}_{\Gamma_i}^{*}
	\end{pmatrix},
	\end{split}
	\]
	\[
	\begin{split}
	\quad \mathcal{L}_{0,i,i+1}:=\begin{pmatrix}
	\ds \mathcal{S}_{\Gamma_{i,i+1}}& 0\\
	\nm
	\ds \mathcal{K}_{\Gamma_{i,i+1}}^{*}& 0
	\end{pmatrix},
	\end{split} \;\; \mbox{ and} \begin{split}
	\quad \mathcal{L}_{0,i,i-1}:=\begin{pmatrix}
	\ds 0&-\mathcal{S}_{\Gamma_{i,i-1}}\\
	\nm
	\ds 0&-t_{i-1}\mathcal{K}_{\Gamma_{i,i-1}}^{*}
	\end{pmatrix}.
	\end{split}
	\]
	It noted that $\mathcal{A}-\mathcal{A}_0$ is a compact operator form $\mathcal{H}$ to $\mathcal{H}_1$ \cite{Ammari2007}.
	Moreover, the operator $\mathcal{A}_0:\mathcal{H}\to\mathcal{H}_1$ is invertible. 
Therefore, by the Fredholm
alternative, it sufffces to prove that $\mathcal{A}$ is injective.
If $\mathcal{A}[\Psi] = 0$, then the function u defined by
\[
u(x) = \begin{cases}
\ds \mathcal{S}_{\Gamma_1}^{k_0} [\psi_1](x), & \quad x \in D_0,\\
\nm
\ds\mathcal{S}_{{\Gamma_j}}^{k_j} [\phi_j](x)+\mathcal{S}_{{\Gamma_{j+1}}}^{k_j} [\psi_{j+1}](x),  & \quad x \in {D_j},\; j = 1,2,\ldots,N-1,\\
\nm
\ds \mathcal{S}_{{\Gamma_N}}^{k_N} [\phi_N](x) ,  & \quad x \in {D_N},
\end{cases}
\]
is the solution to \eqref{wave_equation} with $u^{in} = 0$. By using Rellich's lemma, we have $u\equiv 0 $ in $\RR^3$.
Since $\mathcal{S}_{\Gamma_1}^{k_0} [\psi_1]$ satisfies $(\Delta + k_0^2)\mathcal{S}_{\Gamma_1}^{k_0} [\psi_1] = 0$ in $\cup_{i=1}^N D_{i}$ and $\mathcal{S}_{\Gamma_1}^{k_0} [\psi_1] = 0$ on $\Gamma_1$, and given that $k_0^2$ is not a Dirichlet eigenvalue for $-\Delta$ in $\cup_{i=1}^N D_{i}$, it follows that $\mathcal{S}_{\Gamma_1}^{k_0} [\psi_1] = 0$ in $\cup_{i=1}^N D_{i}$, and consequently in $\RR^3$. Hence, we can conclude
\[
\psi_1=\left.\frac{\p \Scal^{k_0}_{\Gamma_1}[\psi_1]}{\p \nu_1}\right|_+-\left.\frac{\p \Scal^{k_0}_{\Gamma_1}[\psi_1]}{\p \nu_1}\right|_-=0.
\]
Analogously, for $j=1,2,\ldots,N-1$, since
\[
u_{D_j}:=\mathcal{S}_{{\Gamma_j}}^{k_j} [\phi_j]+\mathcal{S}_{{\Gamma_{j+1}}}^{k_j} [\psi_{j+1}]
\]  satisfies $(\Delta + k_j^2)u_{D_j} = 0$ in $\cup_{i=j+1}^N D_{i}$ and $u_{D_j} = 0$ on $\Gamma_{j+1}$, and by the assumption  that $k_j^2$ is not a Dirichlet eigenvalue for $-\Delta$ in $\cup_{i=j+1}^N D_{i}$, then $u_{D_j} = 0$ in $\cup_{i=j+1}^N D_{i}$, and then in $\cup_{i=j}^N D_{i}$. Hence, for $j=1,2,\ldots,N-1$, we have
\[
\psi_{j+1}=\left.\frac{\p \Scal^{k_j}_{\Gamma_{j+1}}[\psi_{j+1}]}{\p \nu_{j+1}}\right|_+-\left.\frac{\p \Scal^{k_j}_{\Gamma_{j+1}}[\psi_{j+1}]}{\p \nu_{j+1}}\right|_- = \left.\frac{\p u_{D_j}}{\p \nu_{j+1}}\right|_+-\left.\frac{\p u_{D_j}}{\p \nu_{j+1}}\right|_-=0.
\]

On the other hand, for $j=1,2,\ldots,N-1$, $u_{D_j}$ satisfies $(\Delta + k_j^2)u_{D_j} = 0$ in $\cup_{i=0}^{j-1} D_{i}$ and $u_{D_j} = 0$ on $\Gamma_{j}$. It follows from \cite[Lemma 2.51]{AK_book2018} that $u_{D_j}=0$ in $\cup_{i=0}^{j-1} D_{i}$,  and then in $\cup_{i=0}^{j} D_{i}$. Hence, for $j=1,2,\ldots,N-1$, we have
\[
\phi_{j}=\left.\frac{\p \Scal^{k_j}_{\Gamma_{j}}[\phi_{j}]}{\p \nu_{j}}\right|_+-\left.\frac{\p \Scal^{k_j}_{\Gamma_{j}}[\phi_{j}]}{\p \nu_{j}}\right|_- = \left.\frac{\p u_{D_j}}{\p \nu_{j}}\right|_+-\left.\frac{\p u_{D_j}}{\p \nu_{j}}\right|_-=0.
\]
Similarly, we also have $\phi_{N} = 0.$
This completes the proof of solvability of the
integral system \eqref{integralsystem}.
The estimate \eqref{estimatessol} is a consequence of solvability and the closed graph theorem.
\end{proof}

\begin{rem}
	We remark that the boundedness and positiveness of the material parameters $\rho_j$ and $\kappa_j$, $j = 0,1,\ldots,N$, guarantee the well-posedness of \eqref{densityA&AC}--\eqref{wave_equation}, which means that \eqref{densityA&AC}--\eqref{wave_equation} has only trivial solution if $u^{in}=0$.  We seek non-trivial solutions to \eqref{densityA&AC}--\eqref{wave_equation} with $u^{in}=0$ when the parameters between layers are allowed to exhibit high contrast, i.e., when $\rho_j$ or $\kappa_j$ approach  infinity or zero, for some  $j = 0,1,\ldots,N$. This can be seen in our subsequent analysis.
\end{rem}

\section{Subwavelength resonance in multi-layer and high contrast metamaterials}\label{sec3}

In this section, we consider the subwavelength resonant phenomenon for the multi-layer and high contrast (MLHC) metamaterials. Assume that the MLHC metamaterials consist of two types of materials: the matrix material and the high-contrast material. Let $\rho$ and $\kappa$ denote the density and bulk modulus of the host matrix material, and let $\rho_\rmr$ and $\kappa_\rmr$ denote the corresponding parameters of the high-contrast material. The configuration of the considered metamaterial is characterized by for $j = 0,1,\ldots,N$,
 \begin{equation}\label{nestedcomplement}
 \rho_j
 =\left\{
 \begin{array}{ll}
 \rho_\rmr, & j \quad \mbox{is odd},\\
 \rho, & j \quad \mbox{is even},
 \end{array}
 \right.\;\; \mbox{ and }\;\;\kappa_j
 =\left\{
 \begin{array}{ll}
 \kappa_\rmr, & j \quad \mbox{is odd},\\
 \kappa, & j \quad \mbox{is even}.
 \end{array}
 \right.
 \end{equation}
Let $N_r$ denote the number of the layers of the high contrast materials, with
 \[
 N_\rmr := \left\lfloor \frac{N+1}{2}\right\rfloor.
 \]
We would like to show that the $N_r$ high contrast materials can resonate within certain frequencies.

We write
$
D_\rmr = \cup_{j=1}^{N_\rmr}D_{2j-1}
$
to signify the entire resonator-nested. The wave speeds and wavenumbers of resonators and  host matrix are given by
\begin{equation}\label{auxiliaryparameters}
v_\rmr=\sqrt{\frac{\kappa_\rmr}{\rho_\rmr}}, \quad v=\sqrt{\frac{\kappa}{\rho}}, \quad k_\rmr =\frac{\omega}{v_\rmr}, \quad k=\frac{\omega}{v}.
\end{equation}
 We introduce two dimensionless contrast parameters:
 \begin{equation}\label{contrastparameter}
 \delta=\frac{\rho_\rmr}{\rho}, \,\, \tau= \frac{k_\rmr }{k}= \frac{v}{v_\rmr} =\sqrt{\frac{\rho_\rmr \kappa}{\rho \kappa_\rmr}}.
 \end{equation}
 Assume that  $
 v = O(1)$, $v_\rmr = O(1)$,  and $\tau = O(1)$; meanwhile
 $\delta \ll 1.$
 The high-contrast assumption is the cause of the underlying system’s subwavelength resonant response and will be at the center of our subsequent analysis.

 Following our earlier discussions in \eqref{Helm-solution}--\eqref{belowinteract} and using \eqref{nestedcomplement}--\eqref{contrastparameter}, the integral equations \eqref{integralsystem} can be rewritten by
\begin{equation}\label{contrastintegralsystem}
\mathcal{A}(\omega,\delta)[\Psi]=(u^{in}, \delta \frac{\partial u^{in}}{\partial \nu_1},0,0,\ldots,0)^T.
\end{equation}
 Here the $2N$-by-$2N$ matrix type operator $\mathcal{A}$ has the block tridiagonal form \eqref{layeredintegralsystem}, where
\begin{equation}\label{contrastselfinteract}
\begin{split}
\quad \mathcal{M}_i=\left\{
\begin{array}{ll}
\begin{pmatrix}
-\mathcal{S}^{k}_{\Gamma_i} & \mathcal{S}^{k_\rmr }_{\Gamma_i} \\
-\delta( \frac{1}{2}I+ \mathcal{K}_{\Gamma_i}^{k, *}) & -\frac{1}{2}I+ \mathcal{K}_{\Gamma_i}^{k_\rmr , *}
\end{pmatrix}, & i \; \mbox{ is odd},\\
\begin{pmatrix}
-\mathcal{S}^{k_\rmr }_{\Gamma_i} & \mathcal{S}^{k}_{\Gamma_i} \\
-( \frac{1}{2}I+ \mathcal{K}_{\Gamma_i}^{k_\rmr , *}) & \delta(-\frac{1}{2}I+ \mathcal{K}_{\Gamma_i}^{k, *} )
\end{pmatrix}, & i\; \mbox{ is even},
\end{array}\right.
\end{split}
\end{equation}
\begin{equation}\label{contrastinteract}
\begin{split}
\quad \mathcal{L}_{i,i+1}=\left\{
\begin{array}{ll}\begin{pmatrix}
\ds \mathcal{S}^{k_\rmr }_{\Gamma_{i,i+1}}& 0\\
\nm
\ds \mathcal{K}_{\Gamma_{i,i+1}}^{k_\rmr , *}& 0
\end{pmatrix},& i \; \mbox{ is odd},\\
\nm
\begin{pmatrix}
\ds\mathcal{S}^{k}_{\Gamma_{i,i+1}}& 0\\
\nm
\ds \delta\mathcal{K}_{\Gamma_{i,i+1}}^{k, *}& 0
\end{pmatrix},& i \; \mbox{ is even},
\end{array}\right.
\end{split}\;\mbox{ and }\; \begin{split}
\quad \mathcal{L}_{i,i-1}:=\left\{
\begin{array}{ll}
\begin{pmatrix}
\ds 0&-\mathcal{S}^{k_\rmr }_{\Gamma_{i,i-1}}\\
\nm
\ds 0&-\mathcal{K}_{\Gamma_{i,i-1}}^{k_\rmr , *}
\end{pmatrix},& i\; \mbox{ is even},\\
\nm
\begin{pmatrix}
0&-\mathcal{S}^{k}_{\Gamma_{i,i-1}}\\
0&-\delta \mathcal{K}_{\Gamma_{i,i-1}}^{k, *}
\end{pmatrix},& i \; \mbox{ is odd}.\\
\end{array}\right.
\end{split}
\end{equation}

To comprehensively analysze the resonant behavior of the scattering system with $N_\rmr$ nested scatterers, we define the subwavelength resonant frequencies and resonant modes of the system based on the high contrast $\delta$ given in \eqref{contrastparameter}.

\begin{defn}\label{defn:resonance}
	Given $\delta>0$, a subwavelength resonant frequency (eigenfrequency) $\omega=\omega(\delta)\in\mathbb{C}$ is defined to be such that

	$\,$(i) in the case that $u^{in} = 0$, there exists a nontrivial solution to \eqref{wave_equation}, known as an associated resonant mode (eigenmode);
	
	(ii) $\omega$ depends continuously on $\delta$ and satisfies $\omega\to0$ as $\delta\to0$.
\end{defn}

According to Definition \ref{defn:resonance} and Theorem \ref{well-posedness}, finding a nontrivial solution to \eqref{wave_equation} is equivalent to finding a nontrivial function $\Psi\in \mathcal{H}$ such that
\begin{equation}\label{Awdoperaror}
\mathcal{A}(\omega,\delta)[\Psi] = 0.
\end{equation}
This can be treated as finding the characteristic value of the operator-valued analytic function $\mathcal{A}(\omega,\delta)$ with respect to $\omega$. For the reader's convenience, we provide some background on characteristic values \cite{AK_book2018}.

\begin{defn}\label{defn:char_value}
	Let $X$ and $Y$ be two Banach spaces. Let $\mathcal{U}(z_0)$ be the set of all operator-valued functions with values in $\mathcal{B}(X, Y)$  which are holomorphic in some neighborhood of $z_0$, except possibly at $z_0$. The point $z_0$ is called a characteristic value of $\mathcal{T}(z)\in\mathcal{U}(z_0)$ if there exists a vector-valued function $\phi(z)$ with values in $\mathbb{C}$ such that
	
	$\,$(i) $\phi(z)$ is holomorphic at $z_0$ and $\phi(z_0)\neq 0$;
	
	(ii)  $\mathcal{T}(z)\phi(z)$ is holomorphic at $z_0$ and $\mathcal{T}(z_0)\phi(z_0)=0$.\\		
	Here, $\phi(z)$ is called a root function of $\mathcal{T}(z)$ associated with the characteristic value $z_0$. The vector $\phi_0 = \phi(z_0)$ is called an eigenvector.
\end{defn}

Based Definitions~\ref{defn:resonance} and \ref{defn:char_value}, it is easy to see that finding a subwavelength resonant frequency of the high-contrast Helmholtz system is equivalent to finding, for a given material contrast $\delta$, a characteristic value $\omega$ of $\mathcal{A}(\omega,\delta)$ which is such that $\omega(\delta)\to0$ as $\delta\to0$.
For $\delta\ll 1$, nontrivial solutions to \eqref{Awdoperaror}  should be perturbations of the elements of  $\ker(\mathcal{A}(0,0))$.
 We will see that $\ker(\mathcal{A}(0,0))$ has dimension equal to $N_\rmr$, the number of distinct resonators in the $N$-layer structure.
 Once we understand $\ker(\mathcal{A}(0,0))$, we can investigate the characteristic values of $\mathcal{A}(\omega,\delta)$ for small $\omega$ and $\delta$ as perturbations of this space. The analysis relies  on the asymptotic perturbation and Gohberg-Sigal theory (i.e., the generalized Rouch\'e theorem to operator-valued functions) \cite{ AK_book2018}, enabling us to prove the existence of subwavelength resonances that satisfy Definition \ref{defn:resonance}.

 For the sake of convenience to the readers and self-containedness of the paper, we introduce the generalised Rouch\'e theorem  \cite{AK_book2018}. Suppose that $z_0$ is a characteristic value of the function $\mathcal{T}(z)$ and $\phi(z)$ is an associated root function. Then there exists a number $m(\phi)\geq 1$ and a vector-valued function $\psi(z)$ with values in $Y$, holomorphic at $z_0$, such that
 \[
\mathcal{T}(z)\phi(z)=(z-z_0)^{m(\phi)}\psi(z), \quad \psi(z_0)\neq 0.
 \]
 The number $m(\phi)$ is called the multiplicity of the root function $\phi(z)$. For $\phi_0\in\ker \mathcal{T}(z_0)$, we define the rank of $\phi_0$, denoted by $\mbox{rank}(\phi_0)$, to be the maximum of the multiplicities of all root functions $\phi(z)$ with $\phi(z_0)=\phi_0$.
Suppose that the space of $\ker \mathcal{T}(z_0)$ is spanned by  $\phi_0^j$, $j=1,2, \dots, m$. We call
\[
M(\mathcal{T}(z_0)):=\sum_{j=1}^m \mbox{rank}(\phi_0^j)
\]
the null multiplicity of the characteristic value $z_0$ of $\mathcal{T}(z)$.

Let $V$ be a simply connected bounded domain with rectifiable boundary $\partial V$. An operator-valued function $\mathcal{T}(z)$ which is finitely meromorphic and of Fredholm type in $V$ and continuous on $\partial V$ is called {\it normal} with respect to $\partial V$ if the operator $\mathcal{T}(z)$ is invertible in $\overline{V}$ , except for a finite number of points of $V$ which are normal points of $\mathcal{T}(z)$.

In the following, we will assume that $\mathcal{T}(z)$ has no poles in $V$. If $\mathcal{T}(z)$ is normal with respect to the $\partial V$ and $z_i$, $i = 1,2,\ldots, m$, are
all its characteristic values lying in $V$, the full multiplicity $\mathcal{M} (\mathcal{T}(z); \partial V) $
of $\mathcal{T}(z)$ in $V$ is the number of characteristic values of $\mathcal{T}(z)$ in $V$, counted with
their multiplicities.

The  generalized Rouch\'e theorem to operator-valued functions is given  in the following.

\begin{lem}[see \cite{AK_book2018}]
	\label{GSTheorem}
	Let $\mathcal{T}(z)$ be an operator-valued function which is normal with respect to $\partial V$. If an operator-valued function $\mathcal{R}(z)$, which is finitely meromorphic in $V$ and continuous on $\partial V$ satisfies the condition
	\[
	\norm{\mathcal{T}^{-1}(z) \mathcal{R}(z)}_{\mathcal{B}(X, X)}<1, \quad \forall z\in \partial V,
	\]
	then $\mathcal{T}(z) + \mathcal{R}(z)$ is also normal with respect to  $\partial V$ and
	\[
	\mathcal{M} (\mathcal{T}(z); \partial V) = \mathcal{M} (\mathcal{T}(z) + \mathcal{R}(z); \partial V),
	\]
	if there is no poles for $\mathcal{T}(z)$ in $V$.
\end{lem}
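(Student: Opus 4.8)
Since the statement is the classical Gohberg--Sigal generalized Rouch\'e theorem, a natural self-contained route is through the operator-valued argument principle together with a homotopy deformation. The plan rests on the generalized logarithmic residue theorem: for an operator-valued function $\mathcal{T}(z)$ that is normal with respect to $\partial V$ and has no poles in $V$, the full multiplicity admits the representation
\[
\mathcal{M}(\mathcal{T}(z);\partial V)=\frac{1}{2\pi\i}\int_{\partial V}\mathrm{tr}\big(\mathcal{T}(z)^{-1}\mathcal{T}'(z)\big)\,\d z .
\]
I would either quote this from \cite{AK_book2018} or recall its proof, which localizes $\mathcal{T}$ near each characteristic value $z_j$ in the Smith-type normal form $\mathcal{T}(z)=E(z)\,\diag\big((z-z_j)^{\kappa_1},\ldots,(z-z_j)^{\kappa_r},I\big)F(z)$ with $E,F$ holomorphic and invertible-valued, so that the residue of $\mathrm{tr}(\mathcal{T}^{-1}\mathcal{T}')$ at $z_j$ equals $\kappa_1+\cdots+\kappa_r$, i.e.\ the local null multiplicity $M(\mathcal{T}(z_j))$ introduced above.

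\emph{Step 1 (normality of the perturbed operator).} On the part of $V$ where $\mathcal{T}(z)^{-1}$ exists, and in particular on $\partial V$, I would factor $\mathcal{T}(z)+\mathcal{R}(z)=\mathcal{T}(z)\big(I+\mathcal{T}(z)^{-1}\mathcal{R}(z)\big)$. The hypothesis $\norm{\mathcal{T}(z)^{-1}\mathcal{R}(z)}_{\mathcal{B}(X,X)}<1$ on $\partial V$ makes $I+\mathcal{T}(z)^{-1}\mathcal{R}(z)$ invertible there by the Neumann series, hence $\mathcal{T}(z)+\mathcal{R}(z)$ is invertible on $\partial V$. Since $\mathcal{T}$ is normal, so that $\mathcal{T}^{-1}$ is finitely meromorphic of Fredholm type, and $\mathcal{R}$ is finitely meromorphic in $V$, the sum $\mathcal{T}+\mathcal{R}$ is again finitely meromorphic and of Fredholm type; the analytic Fredholm theorem then forces its non-invertible points in $\overline{V}$ to be finite in number, so $\mathcal{T}+\mathcal{R}$ is normal with respect to $\partial V$, which is the first assertion.

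\emph{Step 2 (equality of full multiplicities).} Writing $\mathcal{S}(z):=I+\mathcal{T}(z)^{-1}\mathcal{R}(z)$, a direct computation gives $(\mathcal{T}\mathcal{S})^{-1}(\mathcal{T}\mathcal{S})'=\mathcal{S}^{-1}\mathcal{T}^{-1}\mathcal{T}'\mathcal{S}+\mathcal{S}^{-1}\mathcal{S}'$, so by cyclicity of the trace and the representation above,
\[
\mathcal{M}(\mathcal{T}+\mathcal{R};\partial V)=\mathcal{M}(\mathcal{T};\partial V)+\frac{1}{2\pi\i}\int_{\partial V}\mathrm{tr}\big(\mathcal{S}(z)^{-1}\mathcal{S}'(z)\big)\,\d z .
\]
To see the last integral vanishes I would deform $\mathcal{S}$ to the identity through $\mathcal{S}_t(z):=I+t\,\mathcal{T}(z)^{-1}\mathcal{R}(z)$, $t\in[0,1]$: on $\partial V$ one has $\norm{t\,\mathcal{T}(z)^{-1}\mathcal{R}(z)}_{\mathcal{B}(X,X)}<1$ for every $t$, so each $\mathcal{S}_t$ is invertible on $\partial V$ and normal with respect to $\partial V$; consequently $t\mapsto\frac{1}{2\pi\i}\int_{\partial V}\mathrm{tr}(\mathcal{S}_t^{-1}\mathcal{S}_t')\,\d z$ is integer-valued (it is the full multiplicity of $\mathcal{S}_t$ minus that of its poles, and nothing crosses $\partial V$ along the path) and depends continuously on $t$, hence is constant, equal to its value $0$ at $t=0$ where $\mathcal{S}_0=I$. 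This yields $\mathcal{M}(\mathcal{T}+\mathcal{R};\partial V)=\mathcal{M}(\mathcal{T};\partial V)$.

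The main obstacle is not the homotopy but the bookkeeping needed to make the trace integrals legitimate: one must verify that $\mathcal{T}^{-1}\mathcal{R}$ and the family $\mathcal{S}_t$ stay in the class of finitely meromorphic operator functions of Fredholm type — that the principal parts remain finite-rank and the holomorphic parts remain Fredholm of index zero — so that the operator-valued argument principle of \cite{AK_book2018} applies verbatim and $\mathrm{tr}$ of the logarithmic derivative is well defined and additive under products. Granting the argument principle and its multiplicativity in that class, the three steps above constitute the full proof; in the case most relevant here, when $\mathcal{R}$ (hence $\mathcal{T}+\mathcal{R}$) is in addition pole-free in $V$, the full multiplicities are literally the numbers of characteristic values counted with their ranks, and the identity reads exactly as stated.
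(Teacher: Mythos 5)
The paper itself offers no proof of this lemma: it is quoted verbatim from Gohberg--Sigal theory as presented in \cite{AK_book2018}, so your attempt can only be measured against the standard proof there, whose architecture (operator-valued argument principle, the factorization $\mathcal{T}+\mathcal{R}=\mathcal{T}\,(I+\mathcal{T}^{-1}\mathcal{R})$, and the homotopy $t\mapsto I+t\,\mathcal{T}^{-1}\mathcal{R}$) your outline does follow. The problem is that at the two places where the real content of the theorem sits, your argument either fails or is assumed. In Step 1 you claim that, because $\mathcal{T}$ is normal and $\mathcal{R}$ is finitely meromorphic, the sum $\mathcal{T}+\mathcal{R}$ is ``finitely meromorphic and of Fredholm type.'' Finite meromorphy of the sum is fine, but Fredholm type is not: the zeroth Laurent coefficient of $\mathcal{T}+\mathcal{R}$ at a point of $V$ is $\mathcal{T}_0+\mathcal{R}_0$ with $\mathcal{R}_0$ an arbitrary bounded operator, and the implication you invoke (which makes no use of the norm hypothesis) is simply false --- take $\mathcal{T}\equiv I$ and $\mathcal{R}\equiv -I+P$ with $P$ a projection having infinite-dimensional kernel and range, so that $\mathcal{T}+\mathcal{R}=P$ is nowhere Fredholm. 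The boundary condition $\norm{\mathcal{T}^{-1}\mathcal{R}}<1$ on $\partial V$ gives you, via Neumann series, invertibility \emph{on the boundary only}; to get Fredholmness (of index zero) of $\mathcal{T}+\mathcal{R}$, and of every $I+t\,\mathcal{T}^{-1}\mathcal{R}$, \emph{inside} $V$ one needs an extra idea, e.g.\ passing to the Calkin algebra, where the finite-rank principal parts vanish so the relevant functions become holomorphic, and then using the maximum principle for the subharmonic norm of a holomorphic algebra-valued function to propagate the boundary bound into $V$ at the level of essential norms; only then does the analytic Fredholm theorem give finiteness of the non-invertible set and hence normality of $\mathcal{T}+\mathcal{R}$. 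This is half of the conclusion of the lemma, and your proof of it does not stand as written.

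The same omission undermines Step 2: integer-valuedness of your homotopy invariant requires each $\mathcal{S}_t=I+t\,\mathcal{T}^{-1}\mathcal{R}$ to be normal with respect to $\partial V$, which you assert without proof and which is exactly the unresolved issue above. In addition, the trace manipulation is formal: $\mathcal{T}^{-1}\mathcal{T}'$ is not trace class pointwise, so the trace must sit outside the contour integral, and the additivity $\mathcal{M}(\mathcal{T}\mathcal{S};\partial V)=\mathcal{M}(\mathcal{T};\partial V)+\mathcal{M}(\mathcal{S};\partial V)$ cannot be obtained by cyclicity of a pointwise trace; it is itself a nontrivial theorem of Gohberg--Sigal theory, proved through the local Smith-type factorization. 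You acknowledge this by ``granting'' the argument principle and its multiplicativity, but together with the unproved normality this means the two substantive ingredients of the lemma are presupposed rather than established; the homotopy bookkeeping (constancy of the polar multiplicity of $\mathcal{S}_t$ for $t\in(0,1]$ and continuity at $t=0$) is comparatively minor but should also be spelled out.
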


We proceed by employing Lemma \ref{GSTheorem} to establish the existence of subwavelength resonances as defined in Definition \ref{defn:resonance}. To this end, we begin by analyzing the kernel of $\mathcal{A}(0,0)$,
where the matrix operator $\mathcal{A}(0,0)$ is given by $\mathcal{A}(0,0) = \diag\left(\mathcal{L}^0_{i,i-1},\mathcal{M}^0_i, \mathcal{L}^0_{i,i+1}\right)$, with
\begin{equation}\label{contrastselfinteract0-2}
\begin{split}
\quad \mathcal{M}^0_i=\left\{
\begin{array}{ll}
\begin{pmatrix}
\ds -{\mathcal{S}}_{\Gamma_i} & {\mathcal{S}}_{\Gamma_i} \\
\nm
\ds 0 & -\frac{1}{2}I+ \mathcal{K}_{\Gamma_i}^{*}
\end{pmatrix}, & i \; \mbox{ is odd},\\
\nm
\begin{pmatrix}
\ds -{\mathcal{S}}_{\Gamma_i} & {\mathcal{S}}_{\Gamma_i} \\
\nm
\ds -\( \frac{1}{2}I+ \mathcal{K}_{\Gamma_i}^{*}\) & 0
\end{pmatrix}, & i\; \mbox{ is even},
\end{array}\right.
\end{split}
\end{equation}
	\begin{equation}\label{interact}
\begin{split}
\quad \mathcal{L}^{0}_{i,i+1}=\left\{
\begin{array}{ll}\begin{pmatrix}
\mathcal{S}_{\Gamma_{i,i+1}}& 0\\
\nm
\mathcal{K}_{\Gamma_{i,i+1}}^{ *}& 0
\end{pmatrix},& i \; \mbox{ is odd},\\
\nm
\begin{pmatrix}
\mathcal{S}_{\Gamma_{i,i+1}}& 0\\
\nm
0& 0
\end{pmatrix},& i \; \mbox{ is even},
\end{array}\right.
\end{split}\;\mbox{ and }\; \begin{split}
\quad \mathcal{L}^{0}_{i,i-1}:=\left\{
\begin{array}{ll}
\begin{pmatrix}
0&-\mathcal{S}_{\Gamma_{i,i-1}}\\
\nm
0&-\mathcal{K}_{\Gamma_{i,i-1}}^{ *}
\end{pmatrix},& i\; \mbox{ is even},\\
\nm
\begin{pmatrix}
0&-\mathcal{S}_{\Gamma_{i,i-1}}\\
\nm
0&0
\end{pmatrix},& i \; \mbox{ is odd}.
\end{array}\right.
\end{split}
\end{equation}

\begin{lem}\label{A00properties}
We have
$ \ker \(\mathcal{A}(0,0)\) = \textup{span}\,\left\{\hat\Psi_1, \hat\Psi_2,\ldots,\hat\Psi_{N_\rmr}\right\}$ where
\begin{equation}\label{KerA00}
\hat\Psi_j = (0,\ldots,0,\varphi_{2j-1},\varphi_{2j-1},0,\ldots,0)^T
\end{equation}
is the $2N$-dimensional vector with the $(4j-3)$-th and $(4j-2)$-th entrances being $\varphi_{2j-1}= \mathcal{S}_{\Gamma_{2j-1}}^{-1}[\chi_{\Gamma_{2j-1}}].$
\end{lem}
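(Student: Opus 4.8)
The plan is to compute $\ker(\mathcal{A}(0,0))$ by exploiting the block tridiagonal structure together with the explicit form of the blocks in \eqref{contrastselfinteract0-2}--\eqref{interact}. Write $\Psi=(\psi_1,\phi_1,\ldots,\psi_N,\phi_N)^T$ and suppose $\mathcal{A}(0,0)[\Psi]=0$. The system $\mathcal{A}(0,0)[\Psi]=0$ consists, for each $i=1,\ldots,N$, of two scalar equations coming from the $i$-th block row. I would first extract the two "diagonal-type" relations from $\mathcal{M}^0_i$ and couple them with the off-diagonal contributions from $\mathcal{L}^0_{i,i\pm1}$. On the interfaces $\Gamma_i$ the first component of each block row always reads $-\mathcal{S}_{\Gamma_i}[\phi_i]+\mathcal{S}_{\Gamma_i}[\psi_{i+1}']+(\text{single-layer cross terms})=0$; using invertibility of $\mathcal{S}_\Gamma\colon L^2(\Gamma)\to H^1(\Gamma)$ (stated after \eqref{series-s}) these relations propagate a single potential through the layers.

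The key observation I would make precise: the second scalar equation in each block row is structurally asymmetric between odd and even $i$. When $i$ is odd, the second row of $\mathcal{M}^0_i$ is $(0,\ -\tfrac12 I+\mathcal{K}^*_{\Gamma_i})$ and the second row of $\mathcal{L}^0_{i,i-1}$ is $(0,-\mathcal{K}^*_{\Gamma_{i,i-1}})$ while $\mathcal{L}^0_{i,i+1}$ contributes $(\mathcal{K}^*_{\Gamma_{i,i+1}},0)$; when $i$ is even the roles are reversed, with $-(\tfrac12 I+\mathcal{K}^*_{\Gamma_i})$ on $\psi_i$ and the cross $\mathcal{K}^*$ terms multiplying into different slots and the relevant $\delta$-independent $\mathcal{K}^*$ entries replaced by zeros. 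Carefully collecting these, I expect to find that for even $i$ the equations force $\phi_i=0$ and $\psi_{i+1}=0$ (the potentials supported on even interfaces, as seen from both sides, must vanish), so only the odd interfaces $\Gamma_{2j-1}$ carry data; and on each odd interface the surviving equations reduce to $\psi_{2j-1}=\phi_{2j-1}=:\varphi_{2j-1}$ together with a single constraint of the form $(-\tfrac12 I+\mathcal{K}^*_{\Gamma_{2j-1}})[\varphi_{2j-1}]+(\text{a } \mathcal{K}^* \text{ cross term from the adjacent even layer, which has already been shown to vanish})=0$. The identity $(-\tfrac12 I+\mathcal{K}^*_{\Gamma})[\mathcal{S}_\Gamma^{-1}[\chi_\Gamma]]=0$ — equivalently, $\mathcal{S}_\Gamma^{-1}[\chi_\Gamma]$ spans the kernel of $-\tfrac12 I+\mathcal{K}^*_\Gamma$, a standard fact for the interior-problem NP operator, since $\mathcal{S}_\Gamma[\mathcal{S}_\Gamma^{-1}[\chi_\Gamma]]=\chi_\Gamma$ is harmonic and constant inside $D$, hence has zero interior normal derivative — then identifies $\varphi_{2j-1}$ up to scalar multiple with $\mathcal{S}_{\Gamma_{2j-1}}^{-1}[\chi_{\Gamma_{2j-1}}]$.

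Concretely, the steps are: (1) write out the $2N$ scalar equations block-by-block, separating the odd-$i$ and even-$i$ cases; (2) use invertibility of $\mathcal{S}_{\Gamma}$ on the "first component" equations to express each $\phi_i$ in terms of a neighboring $\psi_{i+1}$ (and vice versa) and thereby decouple the layers; (3) show the potentials on even-indexed interfaces vanish, reducing the system to the odd interfaces only; (4) on $\Gamma_{2j-1}$, derive $\psi_{2j-1}=\phi_{2j-1}$ and the single remaining equation $(-\tfrac12 I+\mathcal{K}^*_{\Gamma_{2j-1}})[\varphi_{2j-1}]=0$; (5) invoke $\ker(-\tfrac12 I+\mathcal{K}^*_\Gamma)=\mathrm{span}\{\mathcal{S}_\Gamma^{-1}[\chi_\Gamma]\}$ to conclude. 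Conversely, one checks directly that each $\hat\Psi_j$ of the stated form lies in $\ker(\mathcal{A}(0,0))$ by substituting into \eqref{contrastselfinteract0-2}--\eqref{interact} and using $\mathcal{S}_{\Gamma_{2j-1}}[\varphi_{2j-1}]=\chi_{\Gamma_{2j-1}}$ and harmonicity of constants. The main obstacle is step (3): the off-diagonal $\mathcal{S}$ and $\mathcal{K}^*$ cross terms couple every layer to its neighbors, so the vanishing on even interfaces cannot be read off a single block — it requires propagating the constraints through the chain and carefully tracking which of the $\mathcal{K}^*$-type entries were set to zero in \eqref{interact} in the $\delta\to0$ limit. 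I would handle this by an induction on the layer index (or on $j$), peeling off one interface at a time from the outside in, at each stage using the already-established vanishing to kill the cross term entering the next block's equations.
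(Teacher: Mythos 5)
Your step (3) is the crux of the argument, and it is precisely the step you have not supplied: you only state that you ``expect'' the even-interface densities to vanish and that the cross term entering the odd-interface equation ``has already been shown to vanish''. In fact this decoupling cannot be established in the form you propose, and your own converse check would expose the problem. Take any $N\ge 2$ and substitute $\hat\Psi_j$ into block row $2j$ (or, for the innermost resonator when $2j-1=N$, into row $2j-2$). The only nonzero contribution comes from the first component of the off-diagonal block: for row $2j$ it is $-\mathcal{S}_{\Gamma_{2j,2j-1}}[\varphi_{2j-1}]=-\mathcal{S}_{\Gamma_{2j-1}}[\varphi_{2j-1}]\big|_{\Gamma_{2j}}$, and since $\mathcal{S}_{\Gamma_{2j-1}}[\varphi_{2j-1}]$ is harmonic inside $\Gamma_{2j-1}$ with boundary value $1$, it equals $1$ throughout the interior, so this row evaluates to $-\chi_{\Gamma_{2j}}\neq 0$ (for row $2j-2$ one gets the nonvanishing capacitary potential of $\Gamma_{2j-1}$ restricted to $\Gamma_{2j-2}$). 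Hence the single-layer cross terms do not annihilate vectors supported on a single odd block, and kernel elements of $\mathcal{A}(0,0)$ must carry compensating densities on the neighbouring interfaces rather than having them vanish. Already for $N=2$, solving the four scalar equations shows that $\ker\mathcal{A}(0,0)$ is spanned by $(\varphi_1,\varphi_1,0,\varphi_2)^T$, not $(\varphi_1,\varphi_1,0,0)^T$: the component $\phi_2=\varphi_2$ is needed so that $\mathcal{S}_{\Gamma_2}[\phi_2]$ matches, on $\Gamma_2$, the constant value $1$ of $\mathcal{S}_{\Gamma_1}[\varphi_1]$. Your proposed induction ``peeling off one interface at a time to kill the cross terms'' therefore cannot be carried through as described.

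For comparison, the paper's own proof works only at the level of the diagonal blocks: it observes that $\mathcal{M}^0_i$ is invertible for even $i$, that $\ker\mathcal{M}^0_i=\mathrm{span}\{(\varphi_i,\varphi_i)\}$ for odd $i$, and then concludes ``from the expression of $\mathcal{A}(0,0)$'' --- i.e.\ it passes over exactly the off-diagonal coupling that you correctly single out as the main obstacle. What a complete argument actually yields is the dimension count $\dim\ker\mathcal{A}(0,0)=N_\rmr$: the second-row equations at $(\omega,\delta)=(0,0)$ express that the potential represented in each resonator layer has vanishing Neumann data on both of its boundary components, hence is constant there; these $N_\rmr$ constants are the free parameters, and the densities on the remaining interfaces are then uniquely determined by the (invertible) two-surface single-layer systems and are in general nonzero. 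So the useful way to repair your proof is not to force the even-interface densities to zero, but to parametrize $\ker\mathcal{A}(0,0)$ by the constant values of the potential in the $N_\rmr$ resonator layers and show the remaining components are determined by them; the feature that survives, and that the later rank and Gohberg--Sigal arguments really use, is the dimension $N_\rmr$ together with the appearance of the equilibrium densities $\varphi_{2j-1}$, not the vanishing pattern claimed in \eqref{KerA00}.
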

\begin{proof}[\bf Proof]
	By the invertibility of the operators ${\mathcal{S}}_{\Gamma_i}$ and $ \frac{1}{2}I+ \mathcal{K}_{\Gamma_i}^{*}$, we have that the operator $\mathcal{M}^0_i$ is invertible if $i$ is even. Moreover, We can obtain that from \cite[Lemma 2.7]{AK_book2018}, if $i$ is odd,
	\[
	\ker (\mathcal{M}^0_i) = \textup{span}\,\{(\mathcal{S}_{\Gamma_{i}}^{-1}[\chi_{\Gamma_{i}}],\mathcal{S}_{\Gamma_{i}}^{-1}[\chi_{\Gamma_{i}}])\}.
	\]
	From the expression of the operator $\mathcal{A}(0,0)$, we can conclude that this lemma holds.
\end{proof}

\begin{lem}\label{kernelrank2}
The rank of $\hat\Psi_j $ defined in \eqref{KerA00}  is 2, for $j =1,2,\ldots,N_\rmr$.	
\end{lem}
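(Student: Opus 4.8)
The plan is to compute the multiplicity of the root functions associated with each $\hat\Psi_j$ directly, by expanding $\mathcal{A}(\omega,\delta)$ in powers of $\omega$ and $\delta$ along an admissible curve $\delta = \delta(\omega)$ and checking at which order the obstruction to solvability first appears. Concretely, since $\hat\Psi_j \in \ker(\mathcal{A}(0,0))$, I will seek a root function of the form $\Psi(\omega) = \hat\Psi_j + \omega\,\Psi^{(1)} + \omega^2\,\Psi^{(2)} + \cdots$ together with a scaling $\delta = \delta(\omega)$ (to be pinned down by the solvability conditions), and I will show that one can always solve $\mathcal{A}(\omega,\delta(\omega))[\Psi(\omega)] = O(\omega^2)$ but generically \emph{not} $O(\omega^3)$; hence $\mathrm{rank}(\hat\Psi_j) = 2$. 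The two ingredients driving this are the asymptotic expansions \eqref{series-s} and \eqref{series_K}, in particular $\mathcal{S}_{\Gamma,1}[\phi] = -\frac{\i}{4\pi}\int_\Gamma \phi\,\d\sigma$, $\mathcal{K}_{\Gamma,1} = 0$, and the rank-one operator $\mathcal{K}_{\Gamma,2}^*[1]$ computed in \eqref{rank2}, all of which enter at order $\omega^2$.

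The key steps, in order. First, I would establish the \emph{lower bound} $\mathrm{rank}(\hat\Psi_j) \geq 2$: since $\mathcal{A}(\omega,\delta)$ depends on $\omega$ only through $k = \omega/v$ and $k_\rmr = \omega/v_\rmr$ via the even expansions \eqref{series-s}, \eqref{series_K}, the first-order term $\mathcal{A}^{(1)}$ (coefficient of $\omega^1$) involves only $\mathcal{S}_{\Gamma,1}$ and $\mathcal{K}_{\Gamma,1} = 0$; one checks that $\mathcal{A}^{(1)}[\hat\Psi_j]$ lies in the range of $\mathcal{A}(0,0)$ (indeed the relevant components are constants on the interfaces $\Gamma_{2j-1}$, which are hit by $\mathcal{S}_{\Gamma_{2j-1}}$), so a first-order correction $\Psi^{(1)}$ exists, forcing $m \geq 2$. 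Here the coupling $\delta$ must be taken to vanish at least linearly, $\delta = O(\omega)$ or faster, so that the $\delta$-weighted blocks in \eqref{contrastselfinteract}--\eqref{contrastinteract} do not spoil the order count; I would record the precise admissible rate. Second, for the \emph{upper bound} $\mathrm{rank}(\hat\Psi_j) \leq 2$, I would show the order-$\omega^2$ solvability condition cannot be met for \emph{every} root function with $\Psi(0) = \hat\Psi_j$: projecting $\mathcal{A}^{(2)}[\hat\Psi_j] + \mathcal{A}^{(1)}[\Psi^{(1)}] + (\text{$\delta$-contributions})$ onto $\mathrm{coker}(\mathcal{A}(0,0))$ via the left kernel of $\mathcal{A}(0,0)$, the coefficient of $\omega^2$ reduces (after using \eqref{rank2} and $\mathcal{S}_{\Gamma,1}$) to a nonzero multiple of $|D_{2j-1}|$ or of the capacity-type quantity $\int_{\Gamma_{2j-1}}\varphi_{2j-1}\,\d\sigma = \int_{\Gamma_{2j-1}}\mathcal{S}_{\Gamma_{2j-1}}^{-1}[\chi_{\Gamma_{2j-1}}]\,\d\sigma$, which is strictly negative by the standard single-layer positivity. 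Since this term is nonzero regardless of the choice of $\Psi^{(1)}$ and of the subleading behaviour of $\delta(\omega)$, no root function can achieve multiplicity $3$, giving $\mathrm{rank}(\hat\Psi_j) = 2$ exactly.

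I would carry this out one index $j$ at a time, noting that the block tridiagonal structure \eqref{layeredintegralsystem} decouples the leading-order analysis: the relevant rows/columns are those of $\mathcal{M}_{2j-1}$ together with the adjacent couplings $\mathcal{L}_{2j-1,2j-2}$, $\mathcal{L}_{2j-1,2j}$, whose $\omega^0$-parts either annihilate $\hat\Psi_j$ or feed into invertible even-index blocks, so the solvability obstruction is genuinely local to the $j$-th resonator layer. The main obstacle I anticipate is the bookkeeping in the upper-bound step: one must be careful that the freedom in choosing $\Psi^{(1)}$ (which ranges over a coset of $\ker\mathcal{A}(0,0)$, i.e. can absorb multiples of \emph{all} the $\hat\Psi_i$) together with the still-undetermined rate and constant in $\delta(\omega)$ cannot conspire to cancel the order-$\omega^2$ obstruction — this requires showing the obstruction functional is nondegenerate on that finite-dimensional space of freedoms, which ultimately rests on the strict positivity/negativity of $\int_{\Gamma_{2j-1}}\mathcal{S}_{\Gamma_{2j-1}}^{-1}[\chi_{\Gamma_{2j-1}}]\,\d\sigma$ and the invertibility of $\mathcal{S}_{\Gamma}$ on $H^1(\Gamma)$. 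Everything else is a routine expansion using Lemma \ref{lem-appendix11}, \eqref{series-s}, \eqref{series_K}, and the jump relations.
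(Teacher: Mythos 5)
The quantity this lemma must supply is the Gohberg--Sigal rank of $\hat\Psi_j$ for the \emph{one-variable} family $\omega\mapsto\mathcal{A}(\omega,0)$: in Theorem \ref{thmres} the null multiplicity of the characteristic value $\omega=0$ of $\mathcal{A}(\omega,0)$ is taken to be $\sum_j\mathrm{rank}(\hat\Psi_j)=2N_\rmr$, and only afterwards is this count transferred to $\mathcal{A}(\omega,\delta)$ on a fixed contour via Lemma \ref{GSTheorem}. Your plan instead expands $\mathcal{A}(\omega,\delta(\omega))$ along an undetermined curve $\delta=\delta(\omega)$ (rate ``$O(\omega)$ or faster'', constants to be fixed by solvability), and you even treat the residual freedom in $\delta(\omega)$ as something the order-$\omega^2$ obstruction must survive. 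That is the genuine gap: the rank is defined with $\delta$ frozen at $0$, and the curve you introduce changes the object being computed. Concretely, with $\delta(\omega)=c\omega$ the $O(\omega)$ coefficient of $\mathcal{A}$ acquires, in the second (``Neumann'') row on $\Gamma_{2j-1}$, the extra term $-c\big(\tfrac12 I+\mathcal{K}^{*}_{\Gamma_{2j-1}}\big)[\varphi_{2j-1}]=-c\,\varphi_{2j-1}$ (because $\varphi_{2j-1}\in\ker(-\tfrac12 I+\mathcal{K}^{*}_{\Gamma_{2j-1}})$), which alters the first- and second-order solvability conditions; at $\delta\equiv0$, by contrast, every Neumann-row block of the $O(\omega)$ term vanishes, since $\mathcal{K}_{\Gamma,1}=0$ and the off-surface contributions are normal derivatives of the constants produced by $\mathcal{S}_{\Gamma,1}$. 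The scaling $\delta\sim\omega^2$ belongs to the resonance asymptotics of Section 4, not to this lemma; your argument is salvageable simply by setting $\delta\equiv0$ throughout.

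Once $\delta$ is frozen, your outline is essentially the paper's route: the obstruction sits at order $\omega^2$ in the Neumann row on $\Gamma_{2j-1}$, and its nonvanishing is exactly the computation \eqref{rank2}, i.e. $(\mathcal{K}_{\Gamma_{2j-1},2}[\varphi_{2j-1}],1)_{L^2(\Gamma_{2j-1})}$ equals minus the volume enclosed by $\Gamma_{2j-1}$ --- not $|D_{2j-1}|$ alone, and not the capacity; the capacity integral $\int_{\Gamma_{2j-1}}\varphi_{2j-1}\,\d\sigma$ is what appears at order $\omega$ in the single-layer rows through $\mathcal{S}_{\Gamma,1}$ and is what the corrector $\Psi^{(1)}$ must absorb. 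Two steps of your sketch are asserted rather than proved, and they are where the real work lies: (i) the lower bound needs $\mathcal{A}^{(1)}[\hat\Psi_j]\in\operatorname{Ran}\mathcal{A}(0,0)$, and ``the constants are hit by $\mathcal{S}_{\Gamma_{2j-1}}$'' is not sufficient, since $\Psi^{(1)}$ must solve the full coupled system, whose range has codimension $N_\rmr$ and whose Neumann rows impose further constraints; (ii) the upper bound needs that no choice of $\Psi^{(1)}$ in its coset of $\ker\mathcal{A}(0,0)$ cancels the projected $\omega^2$ coefficient, which requires identifying the cokernel functionals (or at least showing they annihilate the range of $\mathcal{A}^{(1)}$, which is plausible at $\delta=0$ precisely because all Neumann rows of $\mathcal{A}^{(1)}$ vanish). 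In fairness, the paper's own proof records only the nonvanishing in \eqref{rank2} and is silent on both points, so your proposal is compatible with it in substance; but as written, the $\delta(\omega)$ device miscasts what the rank is, and the solvability claims still need proofs.
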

\begin{proof}[\bf Proof]
From Lemma \ref{A00properties} we see that $\mathcal{A}(0,0)$ has an $N_\rmr$-dimensional kernel. It is clear that $\omega=0$ is a characteristic value of $\mathcal{A}(\omega,0)$. In view of \eqref{series_K}, we have
\begin{equation}
\(-\frac{1}{2}I+ \mathcal{K}_{\Gamma_{2j-1}}^{k_\rmr , *}\)[\varphi_{2j-1}](x) = \omega^2 h(x,\omega),\quad x\in \Gamma_{2j-1},\; j = 1,2,\ldots,N_\rmr,
\end{equation}
for some function $h$ which is holomorphic as a function of $\omega$ in a neighborhood of 0.
By interchanging orders of integration and using \eqref{rank2}, it is easy to see that for $j = 1,2,\ldots,N_\rmr,$
\[
\begin{aligned}
&\quad \(\mathcal{K}_{\Gamma_{2j-1}, 2}[\varphi_{2j-1}],1\)_{L^2(\Gamma_{2j-1})} = \(\varphi_{2j-1},\mathcal{K}_{\Gamma_{2j-1}, 2}^*[1]\)_{L^2(\Gamma_{2j-1})} \\
=& -\int_{\Gamma_{2j-1}} \mathcal{S}_{\Gamma_{2j-1}}^{-1}[1](x) \int_{\cup_{i=2j-1}^{N_\rmr}D_i} G_0(x-y)~\d y~\d \sigma(x)\\
=& -\int_{\cup_{i=2j-1}^{N_\rmr}D_i}\int_{\Gamma_{2j-1}} G_0(x-y)\mathcal{S}_{\Gamma_{2j-1}}^{-1}[1](x)  ~\d \sigma(x)~\d y\\
=&-\sum_{i=2j-1}^{N_\rmr}Vol(D_i)\neq 0.
\end{aligned}
\]
Thus, the function $h(x, 0)$ is not identically zero. Therefore, for all $j =1,2,\ldots,N_\rmr$,  the rank of $\hat\Psi_j $ is 2.	
\end{proof}

When the material parameters are real, it follows that $\overline{\mathcal{A}(\omega, \delta)} = \mathcal{A}(-\overline{\omega}, \delta)$, indicating that the subwavelength resonant frequencies are symmetric about the imaginary axis.

\begin{lem}[see \cite{dyatlov2019mathematical}] \label{symmetric_res}
	The set of subwavelength resonant frequencies is symmetric about the imaginary axis. Specifically, if $\omega$ satisfies \eqref{Awdoperaror} for some nontrivial $\Psi \in \mathcal{H}$, then
	\begin{equation*}
	\mathcal{A}(-\overline{\omega},\delta)
	\left[\overline{\Psi}\right]
	= 0.
	\end{equation*}
\end{lem}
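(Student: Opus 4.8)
The plan is to derive the statement from the pointwise conjugation identity $\overline{\mathcal{A}(\omega,\delta)}=\mathcal{A}(-\overline{\omega},\delta)$ announced just before the lemma, and then simply take complex conjugates in \eqref{Awdoperaror}. First I would record the elementary symmetry of the fundamental solution: from \eqref{fundamentalk}, $\overline{G_k(x)}=-\overline{\e^{\i k|x|}}/(4\pi|x|)=-\e^{-\i\overline{k}|x|}/(4\pi|x|)=G_{-\overline{k}}(x)$, since $|x|$ and $4\pi$ are real. Because the host and resonator parameters $\rho,\kappa,\rho_\rmr,\kappa_\rmr$ are real and positive, the wave speeds $v,v_\rmr$ in \eqref{auxiliaryparameters} are real, so the wavenumbers satisfy $-\overline{k}=(-\overline{\omega})/v$ and $-\overline{k_\rmr}=(-\overline{\omega})/v_\rmr$; that is, the substitution $\omega\mapsto-\overline{\omega}$ sends every wavenumber appearing in $\mathcal{A}$ to its negative conjugate. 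Combining this with the integral definitions of $\mathcal{S}_\Gamma^k$, $\mathcal{K}_\Gamma^{k,*}$ and of the cross-interface operators $\mathcal{S}^k_{\Gamma_{i,j}}$, $\mathcal{K}^{k,*}_{\Gamma_{i,j}}$ — using that the surfaces, their outward normals, and the surface measures are real — I would obtain
\[
\overline{\mathcal{S}^{k}_{\Gamma}[\phi]}=\mathcal{S}^{-\overline{k}}_{\Gamma}[\overline{\phi}],
\qquad
\overline{\mathcal{K}^{k,*}_{\Gamma}[\phi]}=\mathcal{K}^{-\overline{k},*}_{\Gamma}[\overline{\phi}]
\]
for every density $\phi\in L^2(\Gamma)$, together with the analogous identities for the off-diagonal operators.

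Next I would assemble these block by block. Every entry of $\mathcal{M}_i$ in \eqref{contrastselfinteract} and of $\mathcal{L}_{i,i\pm1}$ in \eqref{contrastinteract} is a linear combination with real coefficients ($\pm1$, $\pm\tfrac12$, $\pm\delta$) of the identity $I$ and of single-layer and NP operators at wavenumber $k$ or $k_\rmr$. Since complex conjugation is conjugate-linear and commutes with $I$, the displayed identities give $\overline{\mathcal{A}(\omega,\delta)[\Psi]}=\mathcal{A}(-\overline{\omega},\delta)[\overline{\Psi}]$ for all $\Psi\in\mathcal{H}$, where $\overline{\Psi}$ denotes the vector of componentwise conjugates (which lies in $\mathcal{H}$ with the same norm). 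Finally, if $\omega=\omega(\delta)$ is a subwavelength resonant frequency with nontrivial eigenmode $\Psi$, then $\mathcal{A}(\omega,\delta)[\Psi]=0$ by Theorem \ref{well-posedness} and \eqref{Awdoperaror}; conjugating, $\mathcal{A}(-\overline{\omega},\delta)[\overline{\Psi}]=0$ with $\overline{\Psi}\neq0$, so $-\overline{\omega}$ is a characteristic value of $\mathcal{A}(\cdot,\delta)$, and since $-\overline{\omega(\delta)}$ inherits continuity in $\delta$ and $-\overline{\omega(\delta)}\to0$ as $\delta\to0$, it satisfies Definition \ref{defn:resonance}.

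I do not expect a genuine obstacle here: the argument is soft, and the only point needing care is the bookkeeping in the second step — verifying that $\omega$ enters $\mathcal{A}(\omega,\delta)$ \emph{only} through wavenumbers inside $G_k$-type kernels, so that conjugation acts uniformly as $\omega\mapsto-\overline{\omega}$ while leaving the contrast parameter $\delta$ and all the geometric data fixed. If one prefers, the same conclusion can instead be read off from the convergent expansions \eqref{series-s} and \eqref{series_K}: each coefficient operator $\mathcal{S}_{\Gamma,j}$, $\mathcal{K}_{\Gamma,j}$ has a kernel equal to $\i^{\,j}$ times a real function, so that $\overline{\mathcal{S}_{\Gamma,j}[\phi]}=(-1)^j\mathcal{S}_{\Gamma,j}[\overline{\phi}]$ and likewise for $\mathcal{K}_{\Gamma,j}$, whence conjugation turns the term $k^j\mathcal{S}_{\Gamma,j}$ into $(-\overline{k})^j\mathcal{S}_{\Gamma,j}$ and the series sums to $\mathcal{S}_\Gamma^{-\overline{k}}$, respectively $\mathcal{K}_\Gamma^{-\overline{k},*}$.
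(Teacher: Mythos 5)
Your proposal is correct and follows essentially the same route the paper takes: the paper simply asserts the conjugation symmetry $\overline{\mathcal{A}(\omega,\delta)}=\mathcal{A}(-\overline{\omega},\delta)$ (citing \cite{dyatlov2019mathematical}) and you merely spell out the kernel-level verification via $\overline{G_k}=G_{-\overline{k}}$, the realness of the geometric data and of the coefficients $\pm1,\pm\tfrac12,\pm\delta$, and then conjugate \eqref{Awdoperaror}. No gaps; the details you supply (including the alternative check through the expansions \eqref{series-s} and \eqref{series_K}) are exactly what the paper leaves implicit.
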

In view of Lemma \ref{symmetric_res}, we will henceforth present results only for subwavelength resonant frequencies with positive real parts. Using the generalized Rouché theorem in Lemma \ref{GSTheorem}, we now establish the following theorem.

\begin{thm}\label{thmres}
Consider an $N$-layer structure comprising $N_\rmr$ subwavelength nested resonators in $\mathbb{R}^3$.
	For sufficiently small $\delta>0$, there exist $N_\rmr$ subwavelength resonant frequencies $\omega_{N,1}(\delta),\dots,\omega_{N,N_\rmr}(\delta)$ with positive real parts.
\end{thm}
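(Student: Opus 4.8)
The plan is to apply the Gohberg--Sigal theory (Lemma~\ref{GSTheorem}) to the operator-valued function $\omega\mapsto\mathcal{A}(\omega,\delta)$, using $\mathcal{A}(\omega,0)$ as the unperturbed reference. First I would record what Lemmas~\ref{A00properties} and \ref{kernelrank2} give us at $\delta=0$: the function $\mathcal{A}(\omega,0)$ has $\omega=0$ as a characteristic value, $\ker\mathcal{A}(0,0)$ is $N_\rmr$-dimensional and spanned by the $\hat\Psi_j$, and each $\hat\Psi_j$ has rank exactly $2$. Since the only contribution to the null multiplicity of $\omega=0$ for $\mathcal{A}(\omega,0)$ comes from these eigenvectors (one needs to check there is no other characteristic value of $\mathcal{A}(\omega,0)$ near $0$, which follows because $\mathcal{A}(0,0)$ restricted to a complement of $\ker\mathcal{A}(0,0)$ is invertible and $\mathcal{A}(\omega,0)$ depends analytically on $\omega$), we get $M(\mathcal{A}(\omega,0);\partial V)=2N_\rmr$, where $V$ is a small disk centered at $0$ in the $\omega$-plane.

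Next I would fix such a disk $V=\{|\omega|<r_0\}$ with $r_0$ small, and verify that $\mathcal{A}(\omega,0)$ is normal with respect to $\partial V$: it is holomorphic in $\omega$, Fredholm of index zero (being a compact perturbation of the invertible operator $\mathcal{A}_0$ from Theorem~\ref{well-posedness}, uniformly for $\omega$ in a neighborhood of $0$), and invertible on $\overline V$ except at $\omega=0$ (shrinking $r_0$ if necessary, using Theorem~\ref{well-posedness} away from the Dirichlet eigenvalues, which are bounded away from $0$). Then I would set $\mathcal{R}(\omega):=\mathcal{A}(\omega,\delta)-\mathcal{A}(\omega,0)$ and estimate its size. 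From the explicit block forms \eqref{contrastselfinteract}--\eqref{contrastinteract} versus \eqref{contrastselfinteract0-2}--\eqref{interact}, every entry in which $\mathcal{A}(\omega,\delta)$ and $\mathcal{A}(\omega,0)$ differ carries either a factor of $\delta$ or a factor of $\omega$ (through the asymptotic expansions \eqref{series-s}, \eqref{series_K} and Lemma~\ref{lem-appendix11}), so $\|\mathcal{R}(\omega)\|_{\mathcal{B}(\mathcal{H},\mathcal{H}_1)}=O(\delta+r_0)$ uniformly on $\partial V$. Because $\mathcal{A}(\omega,0)^{-1}$ is uniformly bounded on $\partial V$ (the circle of radius $r_0$ stays at fixed positive distance from the characteristic value at $0$ once $r_0$ is fixed), we obtain $\|\mathcal{A}(\omega,0)^{-1}\mathcal{R}(\omega)\|_{\mathcal{B}(\mathcal{H},\mathcal{H})}<1$ on $\partial V$ provided $\delta$ is small enough relative to $r_0$.

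Lemma~\ref{GSTheorem} then yields $M(\mathcal{A}(\omega,\delta);\partial V)=M(\mathcal{A}(\omega,0);\partial V)=2N_\rmr$, so $\mathcal{A}(\omega,\delta)$ has exactly $2N_\rmr$ characteristic values in $V$ counted with multiplicity. By Lemma~\ref{symmetric_res} these are symmetric about the imaginary axis, and since $0$ is not a characteristic value of $\mathcal{A}(\omega,\delta)$ for $\delta>0$ (well-posedness, as noted in the Remark) and none of them can be purely imaginary for small $\delta$ (they perturb off $0$ and pair up under $\omega\mapsto-\overline\omega$), exactly $N_\rmr$ of them have positive real part; call these $\omega_{N,1}(\delta),\dots,\omega_{N,N_\rmr}(\delta)$. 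Finally, continuity in $\delta$ and $\omega_{N,j}(\delta)\to0$ as $\delta\to0$ follow from the fact that for each $\delta$ the characteristic values lie in $V=\{|\omega|<r_0\}$ with $r_0$ allowed to shrink as $\delta\to0$ (equivalently, applying the above argument on a nested family of shrinking disks); together with Definition~\ref{defn:resonance} this identifies them as genuine subwavelength resonant frequencies. I expect the main obstacle to be the bookkeeping in the perturbation estimate for $\mathcal{R}(\omega)$ — checking uniformly over the block-tridiagonal structure that no entry contributes an $O(1)$ term — together with the care needed to quantify how $r_0$ must depend on $\delta$ so that the Rouché hypothesis holds while still forcing the resonances to $0$; the normality and Fredholm bookkeeping for $\mathcal{A}(\omega,0)$ is comparatively routine given Theorem~\ref{well-posedness}.
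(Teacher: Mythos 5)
Your proposal follows essentially the same route as the paper's proof: the kernel of $\mathcal{A}(0,0)$ spanned by the $\hat\Psi_j$ with rank $2$ each gives null multiplicity $2N_\rmr$ at $\omega=0$, normality of $\mathcal{A}(\omega,0)$ on a small contour plus the generalized Rouch\'e theorem (Lemma~\ref{GSTheorem}) transfers the $2N_\rmr$ characteristic values to $\mathcal{A}(\omega,\delta)$ for small $\delta$, and Lemma~\ref{symmetric_res} yields $N_\rmr$ of them with positive real part, with continuity and $\omega\to0$ handled as in the paper. One small correction to your perturbation bookkeeping: $\mathcal{A}(\omega,\delta)-\mathcal{A}(\omega,0)$ consists only of the $\delta$-weighted entries (the $\omega$-dependent blocks coincide identically, since you are comparing with $\mathcal{A}(\omega,0)$, not $\mathcal{A}(0,0)$), so $\mathcal{R}(\omega)=O(\delta)$ uniformly on $\overline V$ rather than $O(\delta+r_0)$ --- this sharper bound is what actually makes $\|\mathcal{A}(\omega,0)^{-1}\mathcal{R}(\omega)\|<1$ hold on $\partial V$ for $\delta$ small relative to the fixed radius $r_0$, since $\|\mathcal{A}(\omega,0)^{-1}\|$ on $\partial V$ grows as $r_0$ shrinks.
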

\begin{proof}[\bf Proof]
	It is easy to see that $\omega=0$ is a characteristic value of $\mathcal{A}(\omega,0)$ since, by Lemma \ref{A00properties}, the kernel of $\mathcal{A}(0,0)$ is spanned  by $\hat\Psi_1, \hat\Psi_2,\ldots,\hat\Psi_{N_\rmr}$.
	Then, based on the linear independence of $\hat\Psi_1, \hat\Psi_2,\ldots,\hat\Psi_{N_\rmr}$ and Lemma \ref{kernelrank2}, the multiplicity of the
	characteristic value $\omega=0$ is $2N_\rmr$.

	Since $\mathcal{A}(\omega,0)$ is a Fredholm operator, there exists a small curve $\partial V \in \mathbb{C}$ enclosing the origin such that $\mathcal{A}(\omega,0)$ is invertible for $\omega \in \partial V$, and $\omega = 0$ is the only characteristic value of $\mathcal{A}(\omega,0)$ within $V$. By \cite[Lemma 1.11]{AK_book2018}, $\mathcal{A}(\omega,0)$ is normal with respect to $\partial V$. Furthermore, the operator $\mathcal{A}(\omega,\delta) - \mathcal{A}(\omega,0)$, as a function of $\omega$, is holomorphic in $V$ and continuous on $\partial V$.
	For  $\delta\ll 1$, one has that for $\omega\in \p V$,
	\[
	\norm{\mathcal{A}(\omega,0)^{-1}(\mathcal{A}(\omega,\delta)-\mathcal{A}(\omega,0))}_{\mathcal{B}(\mathcal{H},\mathcal{H})}<1.
	\]
	By Lemma \ref{GSTheorem}, we have
	\[
	\mathcal{M} (\mathcal{A}(\omega,\delta); \partial V) = \mathcal{M} (\mathcal{A}(\omega,0); \partial V) = 2N_\rmr,
	\]
	implying the operator $\mathcal{A}(\omega,\delta)$ has
	$2N_\rmr$ characteristic values inside $V$ for small enough $\delta$.
	Therefore, by Lemma \ref{symmetric_res},
	$N_\rmr$ of these, namely, $\omega_{N,1}(\delta),\dots,\omega_{N,N_\rmr}(\delta)$
	have positive real parts, while $N_\rmr$ characteristic values have negative real parts.
	
	Similarly, we can conclude that $\omega_{N,1}(\delta),\dots,\omega_{N,N_\rmr}(\delta)$  are continuous with respect to $\delta$. Specifically, if $U \in \mathbb{C}$ is a neighborhood of $\omega_{N,j}
	(\delta_1)$, $j = 1, 2,\ldots,N_\rmr$, we can express
	\[
	\mathcal{A}(\omega,\delta_2) = \mathcal{A}(\omega,\delta_1)+ \(\mathcal{A}(\omega,\delta_2)-\mathcal{A}(\omega,\delta_1)\)
	\]
	and from Lemma \ref{GSTheorem} it follows that $\omega_{N,j}
	(\delta_2) \in U$ when $| \delta_1 - \delta_2|$ is small enough.
The proof is complete.
\end{proof}

\section{Subwavelength resonances in multi-layer concentric balls}\label{sec4}

It is known that the subwavelength resonant frequency is associated with the shape of the resonators. However, breaking the rotational symmetry of the resonators does not result in mode splitting (eigenfrequency separation). In other words, altering the shape of the resonator does not change the number of subwavelength resonant frequencies. Due to the fact, we consider the Helmholtz system \eqref{wave_equation}, where $D$ is a multi-layer concentric ball, as shown in Figures \ref{single_oscillator}--\ref{dual_oscillator}. These figures depict concentric balls with layer numbers ranging from 1 to 4. Specifically, we give a sequence of layers: $D_0, D_1, \dots, D_{N}$ by
\begin{equation}\label{eq:aj}
D_{0}:=\{r>r_{1}\}, \quad D_j:=\{r_{j+1}<r\leqslant r_{j}\}, \quad  j=1,2,\ldots, N-1, \quad D_N:=\{r\leqslant r_N\},
\end{equation}
and the interfaces between the adjacent layers can be rewritten by
\begin{equation}\label{interface}
\Gamma_j:=\left\{|x|=r_j\right\}, \quad  j=1,2,\ldots,N,
\end{equation}
where $N\in \mathbb{N}$ and $r_j\in\mathbb{R}_+$.

Let $j_n(t)$ and $h^{(1)}_n(t)$ denote the spherical Bessel and Hankel functions of the first kind of order $n$, respectively, and let $Y_n(\hat{x})$ denote the spherical harmonics. Using spherical coordinates, the solution $u$ to \eqref{wave_equation}, with material parameters specified in \eqref{nestedcomplement}--\eqref{contrastparameter}, can be expressed as
\begin{equation}\label{Helm-solution2}
u(x) = \begin{cases}
\ds \sum_{n=0}^{+\infty}a_{1,n}h^{(1)}_n(kr)Y_n, & \quad x \in D_0,\\
\nm
\ds \sum_{n=0}^{+\infty}\(b_{j,n}j_n(k_\rmr  r)Y_n + a_{j+1,n}h^{(1)}_n(k_\rmr  r)Y_n\),  & \quad x \in {D_j},\; j\; \mbox{ is odd},\\
\nm
\ds \sum_{n=0}^{+\infty}\(b_{j,n}j_n(k r)Y_n + a_{j+1,n}h^{(1)}_n(k r)Y_n\),  & \quad x \in {D_j},\; j\; \mbox{ is even},
\end{cases}
\end{equation}
where $a_{N+1,n}=0$.
By using the transmission conditions across $\Gamma_j$, $j=1,2,\ldots,N$,  the constant parameters satisfy
\[
\begin{pmatrix}
{M}_{1,n} & {R}_{1,n} & &  & & \\
{L}_{2,n} & {M}_{2,n} &{R}_{2,n}& &&\\
&{L}_{3,n} & {M}_{3,n} & {R}_{3,n} & &\\
& & \ddots &\ddots & \ddots& \\
&  & & {L}_{N-1,n} & {M}_{N-1,n} & {R}_{N-1,n}\\
&  & &  &{L}_{N,n} & {M}_{N,n}
\end{pmatrix}
\begin{pmatrix}
a_{1,n}\\b_{1,n}\\ a_{2,n}\\b_{2,n}\\ \vdots\\a_{N,n}\\b_{N,n}
\end{pmatrix} = 0,
\]
for all $n\in \mathbb{N}\cup\{0\}$,
where
\begin{equation}\label{contrastselfinteract2}
\begin{split}
\quad {M}_{i,n}=\left\{
\begin{array}{ll}
\begin{pmatrix}
-h^{(1)}_n(kr_i) & j_n(k_\rmr  r_i) \\
\nm
-\delta h^{(1)'}_n(kr_i) & \tau j'_n(k_\rmr  r_i)
\end{pmatrix}, & i \; \mbox{ is odd},\\
\nm
\begin{pmatrix}
-h^{(1)}_n(k_\rmr  r_i) & j_n(k r_i) \\
\nm
-\tau  h^{(1)'}_n(k_\rmr  r_i) & \delta j'_n(k r_i)
\end{pmatrix}, & i\; \mbox{ is even},
\end{array}\right.
\end{split}
\end{equation}
\begin{equation}\label{contrastinteract2}
\begin{split}
\quad R_{i,n}=\left\{
\begin{array}{ll}\begin{pmatrix}
h^{(1)}_n(k_\rmr  r_i)& 0\\
\nm
\tau h^{(1)'}_n(k_\rmr  r_i)& 0
\end{pmatrix},& i \; \mbox{ is odd},\\
\nm
\begin{pmatrix}
h^{(1)}_n(k r_i)& 0\\
\nm
\delta h^{(1)'}_n(k r_i)& 0
\end{pmatrix},& i \; \mbox{ is even},
\end{array}\right.
\end{split}\;\mbox{ and }\; \begin{split}
\quad {L}_{i,n}:=\left\{
\begin{array}{ll}
\begin{pmatrix}
0&-j_n(k_\rmr r_i)\\
\nm
0&-\tau j'_n(k_\rmr r_i)
\end{pmatrix},& i\; \mbox{ is even},\\
\nm
\begin{pmatrix}
0&-j_n(kr_i)\\
\nm
0&-\delta j'_n(kr_i)
\end{pmatrix},& i \; \mbox{ is odd}.\\
\end{array}\right.
\end{split}
\end{equation}

It is important to emphasize that, from a physical perspective, we are concerned with the resonance of nested materials, which corresponds to the system's lowest resonant frequency. At this frequency, the system exhibits a factor corresponding to $n = 0$, as the lowest resonance is characterized by the fewest number of oscillations.
Consequently,  the matrix
\begin{equation}\label{MLCBM}
\bm{A}_N=\bm{A}_N(\omega,\delta):=\diag\left(L_{i,0},M_{i,0}, R_{i,0}\right)
\end{equation}
becomes singular.

From Theorem \ref{thmres}, we know that $\det(\bm{A}_N) = 0$ process $N_\rmr := \left\lfloor \frac{N+1}{2}\right\rfloor$ eigenfrequencies  with positive real parts.
This can be seen in the proofs of Theorems \ref{single_reson_frequency}--\ref{dual_reson_frequency}, i.e., the primary reason for mode splitting lies in the fact that as the number of nested resonators increases, the degree of the corresponding characteristic polynomial also increases, while the type of resonance (which consists solely of monopolar resonances) remains unchanged.
Thus, by Galois’ theory \cite{JPT2001}, it is possible to obtain the exact formulas  for the eigenvalues of structures with layer numbers $\leq 8$ by using root-finding formulas for quadratic, cubic, and quartic equations. In what follows, to simplify the calculations and avoid the complexity of cubic or quartic formulas, we shall drive the exact formulas  for the eigenfrequencies for single-resonator, dual-resonator models in the rest of this section.
For structures with a large number of layers, we shall provide  numerical computations of the  eigenfrequencies in the next section, which is also important from a practical perspective.
The following asymptotic expansions shall be used
\begin{equation}\label{besssmall}
j_0(t) =  1 - \frac{t^2}{6} + \frac{t^4}{120} + O(t^6),
\end{equation}
\begin{equation}\label{hankelsmall}
h^{(1)}_0(t) = 1 - \frac{t^2}{6} + \frac{t^4}{120} + \mathrm{i}(-\frac{1}{t} + \frac{t}{2} - \frac{t^3}{24})+ O(t^5),
\end{equation}
for $t\ll 1$.

\subsection{Single-resonator}
In this subsection, we shall derive the eigenfrequency of a single-resonator.
We now implement the single-resonator  to two simple systems, a solid sphere (see, Figure \ref{single_oscillator} (a))  and  a spherical shell (see, Figure \ref{single_oscillator} (b)).
\begin{figure}[!htpb]
	\centering
	\subfigure[Solid sphere]{\includegraphics[width=0.32\textwidth]{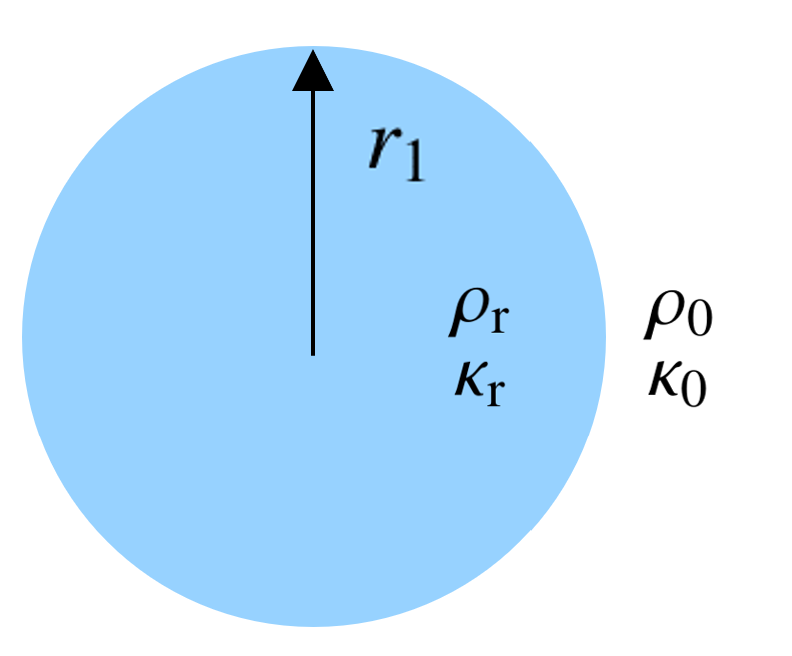}}\quad
	\subfigure[Spherical shell]{\includegraphics[width=0.31\textwidth]{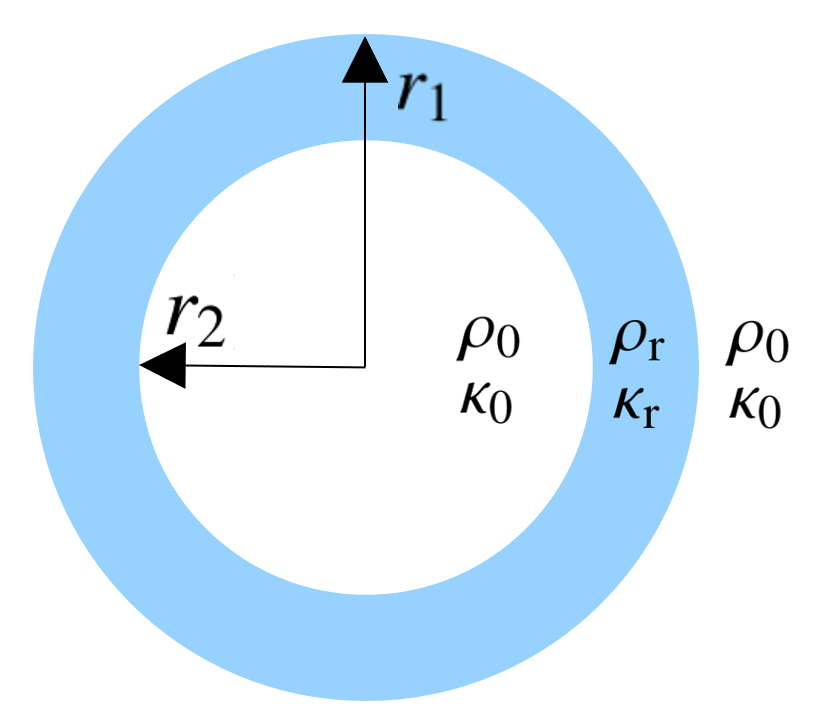}}
	\caption{\label{fig34} Schematic illustration of single-resonator (a) $N=1$, (b) $N=2$. }\label{single_oscillator}
\end{figure}

\begin{thm}\label{single_reson_frequency}
	Let the single-resonator be presented by Figure \ref{fig34}, where the material parameters are given in \eqref{nestedcomplement}--\eqref{contrastparameter}. Then, in the quasi-static regime, there exists one subwavelength resonant frequency for monopolar mode $n=0$.
	Specifically, we have that
	
	$\;\bf(i)$ for $N=1$, as shown in Figure \ref{fig34} (a),  the subwavelength resonant frequency is expressed by
	\begin{equation}\label{solidfreqc}
	\omega_{1,1}(\delta) = \frac{\sqrt{3 }v_\rmr}{{r_1}}\delta^{\frac{1}{2}} - \mathrm{i}\frac{3v_\rmr}{2\tau r_1}\delta +   O (\delta^{\frac{3}{2}});
	\end{equation}
	
	$\bf(ii)$ for $N=2$, as shown in Figure \ref{fig34} (b), the subwavelength resonant frequency is expressed by
	\begin{equation}\label{shellfreqc}
		\omega_{2,1}(\delta) = \frac{\sqrt{3 r_1}v_\rmr}{\sqrt{r_1^3-r_2^3}}\delta^{\frac{1}{2}} - \mathrm{i}\frac{3r_1^2v_\rmr}{2\tau  \big(r_1^3-r_2^3\big)}\delta +   O (\delta^{\frac{3}{2}}).
	\end{equation}
\end{thm}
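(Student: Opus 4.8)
\medskip
\noindent\textit{Proof plan.}
The plan is to reduce the existence of a resonance to the scalar equation $\det\bm{A}_N(\omega,\delta)=0$ from \eqref{MLCBM}, to substitute the quasi-static expansions \eqref{besssmall}--\eqref{hankelsmall} (it is convenient to also use the closed form $h^{(1)}_0(t)=-\mathrm{i}\,\mathrm{e}^{\mathrm{i}t}/t$, so that $h^{(1)\prime}_0(t)/h^{(1)}_0(t)=\mathrm{i}-1/t$), to read off the leading-order balance, and then to extract the $O(\delta)$ correction from the ansatz $\omega(\delta)=a\delta^{1/2}+b\delta+O(\delta^{3/2})$ by matching successive powers of $\delta^{1/2}$. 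The existence and the count of exactly one resonance with positive real part are already provided by Theorem~\ref{thmres} (here $N_\rmr=1$), and we work with the monopole $n=0$ as in \eqref{MLCBM}, consistent with the kernel computation in Lemma~\ref{A00properties}; the content of the present statement is the two explicit terms in \eqref{solidfreqc} and \eqref{shellfreqc}.

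For $N=1$ the matrix $\bm{A}_1=M_{1,0}$ is a single $2\times2$ block, so $\det\bm{A}_1=0$ reads $\delta\,j_0(k_\rmr r_1)\,h^{(1)\prime}_0(kr_1)=\tau\,h^{(1)}_0(kr_1)\,j_0'(k_\rmr r_1)$; dividing by $h^{(1)}_0(kr_1)\neq0$ and clearing $1/(kr_1)$ it becomes $\delta(\mathrm{i}kr_1-1)\,j_0(k_\rmr r_1)=\tau kr_1\,j_0'(k_\rmr r_1)$. Inserting $j_0(s)=1-s^2/6+\cdots$, $j_0'(s)=-s/3+\cdots$ and using $\tau k=k_\rmr$ from \eqref{contrastparameter}, both sides are $O(\delta)$ and the leading balance is $-\delta=-k_\rmr^2r_1^2/3$, i.e.\ $\omega^2=3v_\rmr^2\delta/r_1^2$, which is the first term of \eqref{solidfreqc}. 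Writing $k_\rmr=\kappa_0\delta^{1/2}+\kappa_1\delta+\cdots$ with $\kappa_0=\sqrt{3}/r_1$ and collecting the $O(\delta^{3/2})$ terms, into which the imaginary part $\mathrm{i}kr_1$ of $\mathrm{i}kr_1-1$ now enters, gives a linear equation for $\kappa_1$; solving it and setting $\omega=v_\rmr k_\rmr$ produces the $-\mathrm{i}\,3v_\rmr/(2\tau r_1)$ correction.

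For $N=2$ the matrix $\bm{A}_2$ is $4\times4$ block--tridiagonal. I would first eliminate the interior amplitude $b_{2,0}$ from the two $\Gamma_2$ equations and the exterior amplitude $a_{1,0}$ from the two $\Gamma_1$ equations, reducing $\det\bm{A}_2=0$ to the vanishing of a $2\times2$ determinant in the shell amplitudes $(b_{1,0},a_{2,0})$. Equivalently, the shell field $u=b_{1,0}j_0(k_\rmr r)+a_{2,0}h^{(1)}_0(k_\rmr r)$ solves $(\Delta+k_\rmr^2)u=0$ in $r_2<r<r_1$ subject to the effective Robin conditions $\partial_r u=\delta k\,\eta_j\,u$ on $\Gamma_j$, with $\eta_1=h^{(1)\prime}_0(kr_1)/h^{(1)}_0(kr_1)=\mathrm{i}-1/(kr_1)$ and $\eta_2=j_0'(kr_2)/j_0(kr_2)=\cot(kr_2)-1/(kr_2)=O(\omega)$; since $\delta k\,\eta_2=O(\delta^2)$, the inner interface is effectively Neumann to the orders needed. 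Taking the radial combination of $j_0(k_\rmr r)$ and $y_0(k_\rmr r)$ that is Neumann at $r_2$ and imposing the Robin condition at $r_1$, then expanding via \eqref{besssmall}--\eqref{hankelsmall}, the leading balance simplifies to $k_\rmr^2(r_1^3-r_2^3)=3\delta r_1$, hence $\omega^2=3r_1v_\rmr^2\delta/(r_1^3-r_2^3)$, the first term of \eqref{shellfreqc}. The next order is determined by the $O(\delta)$ part of $\eta_1$ (its ``$\mathrm{i}$'', i.e.\ the radiative/exterior contribution) together with the $\delta^{1/2}$--shift of $k_\rmr$; solving the resulting linear equation for the correction and translating back through $\omega=v_\rmr k_\rmr$ and \eqref{contrastparameter} yields the $-\mathrm{i}\,3r_1^2v_\rmr/(2\tau(r_1^3-r_2^3))$ correction, which degenerates to the $N=1$ answer as $r_2\to0$.

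The main obstacle is the bookkeeping for $N=2$: after clearing the singular factors $1/(kr_j)$ one must expand $j_0$, $h^{(1)}_0$ and their derivatives one order beyond leading and identify precisely which of the many cross terms of the $2\times2$ determinant survive at each power of $\delta^{1/2}$; in particular one must verify that the unique real contribution at the order fixing $b$ comes from $\eta_1$ (so that $b$ is purely imaginary, a radiative damping term), and that the inner-interface corrections, being $O(\delta^2)$, drop out. Making the effective-Robin reduction of $\bm{A}_2$ rigorous, and checking that the leading balance is non-degenerate so that the expansion is well posed and matches the count of Theorem~\ref{thmres}, are the remaining points needing care.
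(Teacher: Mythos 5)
Your proposal is correct and rests on the same core mechanism as the paper: restrict to the monopole $n=0$, impose $\det\bm{A}_N(\omega,\delta)=0$, expand via \eqref{besssmall}--\eqref{hankelsmall}, note $\delta=O(\omega^2)$, and match the ansatz $\omega=a_1\delta^{1/2}+a_2\delta+O(\delta^{3/2})$ order by order; your $N=1$ computation (via $h^{(1)\prime}_0/h^{(1)}_0=\mathrm{i}-1/t$) reproduces \eqref{solidfreqc} exactly. Where you genuinely deviate is the $N=2$ case: the paper simply expands the full $4\times4$ determinant into the scalar relation $-c_1\delta+c_2\omega^2+\mathrm{i}c_3\omega^3+O(\omega^4)+O(\delta^2)=0$ and matches the $\delta$ and $\delta^{3/2}$ coefficients, whereas you first eliminate the exterior and core amplitudes to get effective Robin data $\partial_r u=\delta k\,\eta_j u$ on $\Gamma_1,\Gamma_2$ for the shell field. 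That reduction is valid, and the point you flag as needing care does check out: the inner coefficient $\delta k\,\eta_2=O(\delta^2)$ enters the reduced $2\times2$ determinant only through its product with the $O(1/\omega)$ Hankel-type entry, i.e.\ at $O(\delta^{3/2})$ relative to the $O(\delta^{1/2})$ leading balance, which is precisely the order that fixes the (discarded) $\delta^{3/2}$ coefficient of $\omega$ and hence cannot alter $a_1$ or $a_2$ --- consistent with the paper's $O(\omega^4)+O(\delta^2)$ remainders; likewise the unique real term at the order fixing $a_2$ does come from the radiative part $\mathrm{i}$ of $\eta_1$, so $a_2$ is purely imaginary, and your stated balances $k_\rmr^2(r_1^3-r_2^3)=3\delta r_1$ and $a_2=-\mathrm{i}\,3r_1^2v_\rmr/(2\tau(r_1^3-r_2^3))$ agree with \eqref{shellfreqc}. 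What each route buys: the paper's brute-force expansion is mechanical and scales directly to the $N=3,4$ cases of Theorem \ref{dual_reson_frequency}, while your impedance reduction is less computation and explains structurally why only the outer interface contributes radiative damping and why the inner interface is effectively Neumann; the only remaining work on your side is the (routine) bookkeeping you already identify, since the nondegeneracy of the leading balance and the count of one positive-real-part frequency are indeed supplied by Theorem \ref{thmres} with $N_\rmr=1$.
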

\begin{proof}[\bf Proof]
	$\bf(i)$
	Plugging the formulas \eqref{besssmall}--\eqref{hankelsmall} into $\det (\bm{A}_1) = 0$, we have
\[
\mathrm{i}\frac{ \tau^2v_\rmr^2}{{r_{1}}^2}\delta-\i \frac{\tau^2 }{3}\omega^2+\frac{r_{1}\tau }{3v_{r}}\omega^3+O(\omega^4) + O(\delta^2) = 0.
\]
	It can be seen that $\delta = O(\omega^2)$, and thus $\omega_1(\delta) = O(\sqrt{\delta})$.
	We express the $\omega_1(\delta)$ in the following asymptotic expansion
	$$
	\omega_1(\delta) = a_1 \delta^{\frac{1}{2}} + a_2 \delta + O (\delta^{\frac{3}{2}}).
	$$
	Thus, we obtain
	\[
	\begin{aligned}
	&\mathrm{i}\frac{ \tau^2v_\rmr^2}{{r_{1}}^2}\delta-\i \frac{\tau^2 }{3}\(a_1 \delta^{\frac{1}{2}} + a_2 \delta + O (\delta^{\frac{3}{2}})\)^2\\
	&+\frac{r_{1}\tau }{3v_{r}}\(a_1 \delta^{\frac{1}{2}} + a_2 \delta + O (\delta^{\frac{3}{2}})\)^3 +O(\omega^4) + O(\delta^2) = 0.
	\end{aligned}
	\]
	From the coefficients of the $\delta$ and $\delta^{\frac{3}{2}}$ terms, we have
	\[
	\begin{aligned}
	\mathrm{i}\frac{ \tau^2v_\rmr^2}{{r_{1}}^2}-\i \frac{\tau^2 }{3}a_1^2 = 0,\quad \mbox{ and } \quad- \mathrm{i}\frac{2\tau^2}{3}a_1a_2+\frac{r_{1}\tau }{3v_{r}} a_1^3  = 0,
	\end{aligned}
	\]
	which further gives
	\[
	\begin{aligned}
	a_1^2 =  \frac{3v_\rmr^2}{r_1^2},\quad a_2 = - \mathrm{i}\frac{3v_\rmr}{2\tau r_1}.
	\end{aligned}
	\]
	
$\bf(ii)$ Similarly, it follows from $\det (\bm{A}_2) = 0$ that	
	\begin{equation}
	-\frac{\tau^3 {v_{r}}^4}{{r_{1}}^2{r_{2}}^2}\delta+
	\frac{\,\left({r_{1}}^3-{r_{2}}^3\right)\tau ^3 v^2_{r}}{3\,{r_{1}}^3\,{r_{2}}^2}\omega^2 +
	\i \frac{\tau ^2\,v_{r}\,\left({r_{1}}^3-{r_{2}}^3\right)}{3\,{r_{1}}^2\,{r_{2}}^2}\omega^3+O(\omega^4) + O(\delta^2) = 0.
	\end{equation}
	There also have $\delta = O(\omega^2)$, and thus $\omega_2(\delta) = O(\sqrt{\delta})$.  We write the $\omega_2(\delta)$ in the following asymptotic expansion
	\[
	\omega_2(\delta) = a_1 \delta^{\frac{1}{2}} + a_2 \delta + O (\delta^{\frac{3}{2}}).
	\]
	In a similar manner, we have
		\[
	\begin{aligned}
	a_1^2 =  \frac{{3 r_1}v^2_r}{{r_1^3-r_2^3}},\quad a_2 = - \mathrm{i}\frac{3r_1^2v_\rmr}{2\tau  \big(r_1^3-r_2^3\big)}.
	\end{aligned}
	\]
	The proof is complete.
\end{proof}

\begin{rem}
	(i)
	For single-layer resonators of general shape, the eigenfrequencies have been already formulated in \cite{AFGLZ_AIHPCAN}.
	Following the methods in \cite{AFGLZ_AIHPCAN}, we can represent the  eigenfrequency of a single-resonator with a general shape in a unified way:
		\[
	\omega_{N,1}(\delta) = v_\rmr\sqrt{ \frac{  Cap(\Gamma_1)}{Vol(D_1)}}\delta^{\frac{1}{2}}-\i\frac{ Cap(\Gamma_1)^2 v_\rmr}{8\pi  \tau Vol(D_1) }\delta + O(\delta^{\frac{3}{2}})
	\]	
	for $N=1,2$, where $Vol(D_1)$ is the volume of $D_1$ and $Cap(\Gamma_1):= - (\mathcal{S}_{\Gamma_1}^{-1}[\chi_{ \Gamma_1}], \chi_{\Gamma_1})$ is the capacity of $\Gamma_1$.
	In the concentric ball case, we have $Cap(\Gamma_1) = 4\pi r_1$ for $N=1,2$, $Vol(D_1) = \frac{4\pi r_1^3}{3}$ for $N=1$, and $Vol(D_1) = \frac{4\pi \big(r_1^3-r_2^3\big)}{3}$ for $N=2$. Therefore, in this way, we can also obtain \eqref{solidfreqc} and \eqref{shellfreqc}.
	
	(ii) In the limit $r_2\to 0$, one has
$\omega_{2,1}\searrow  \omega_{1,1}$. while in the limit $r_1\to+\infty$, one has $\omega_{2,1}\to 0$.
This indicates that, in the scenario of a cavity situated within a bulk resonator, subwavelength resonance cannot occur. In fact, the corresponding  integral operator is invertible in this case.
This distinction is fundamentally why the mode splitting observed in this case differs from that in multi-layer plasmonic materials\cite{FangdengMMA23,FDLMMA15,Ammari2013}. In the latter, plasma exciton modes interact and hybridize, resulting in the excitation of localized surface plasma excitons at each interface \cite{PRHN2003SCI,KZDF}. In contrast, the resonance in the former case is confined solely to the outer surface.

 (iii)
For spherical shell, the subwavelength resonant frequency  \eqref{shellfreqc} possesses higher tunability. It can be treated as an interaction between the capacity of the outer surface and the volume of the shell. This interaction results in mode shift: the mode $\Re \omega_{2,1}$ shows an upward frequency shift (blueshift). The strength of the interaction can be seen as a function of the ratio $\frac{  Cap(\Gamma_1)}{Vol(D_1)}$, which in a concentric shell is equivalently $\frac{r_1}{r_1^3-r_2^3}$ and is referred to as Cap-to-Vol ratio (abbreviated as CVR). As the CVR increases, indicating a stronger coupling (interaction), a larger blueshift is observed.

\end{rem}
\subsection{Dual-resonator}
In this subsection, we will focus on dual-resonators with a Matryoshka-like structure. These structures consist of a solid sphere (see Figure \ref{dual_oscillator} (a)) or a spherical shell (see Figure \ref{dual_oscillator} (b)) encapsulated within an additional concentric spherical shell.

\begin{figure}[!htpb]
	\centering
	\subfigure[Three-layer structure]{\includegraphics[width=0.33\textwidth]{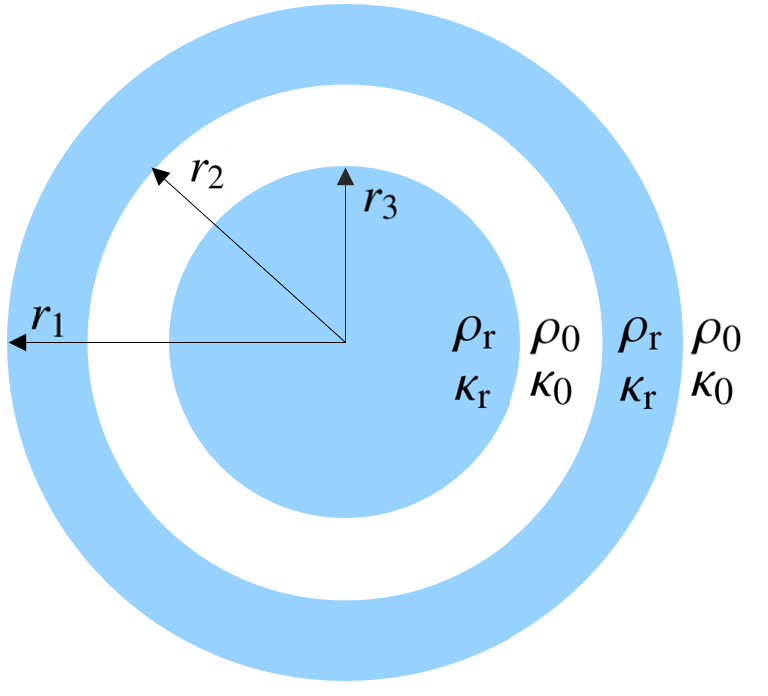}}\quad
	\subfigure[Four-layer structure]{\includegraphics[width=0.34\textwidth]{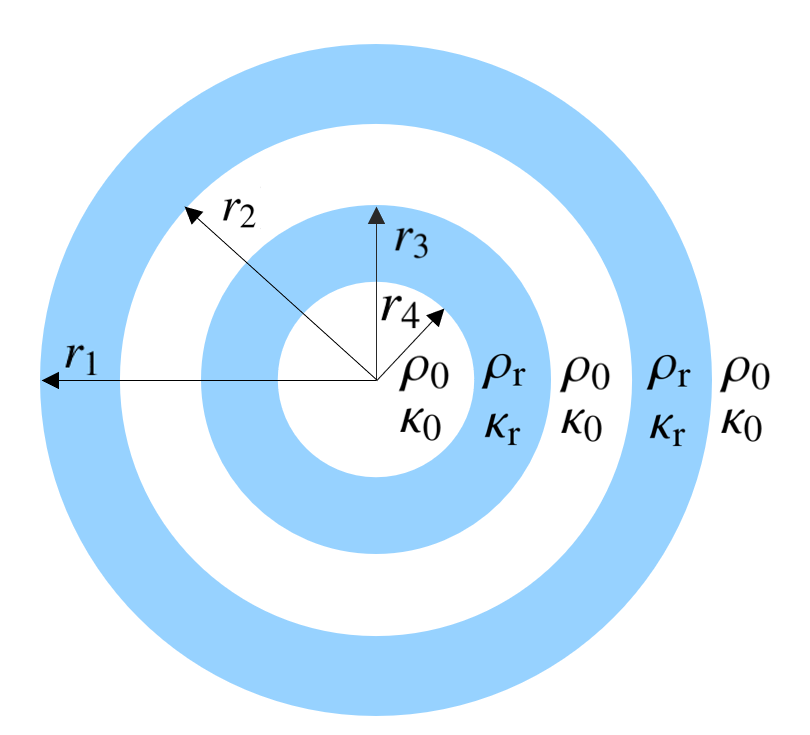}}
	\caption{\label{fig33} Schematic illustration of dual-resonator of Matryoshka-like (a) $N=3$, (b) $N=4$. }\label{dual_oscillator}
\end{figure}

\begin{thm}\label{dual_reson_frequency}
	Let the dual-resonator be presented by Figure \ref{fig33}, where the material parameters are given in \eqref{nestedcomplement}--\eqref{contrastparameter}. Then, in the quasi-static regime, there exist two subwavelength resonant frequencies for the monopolar mode $n=0$.
	Particularly, we have that
	
	$\;\bf(i)$ for $N=3$, as shown in Figure \ref{fig33} (a),  two subwavelength resonant frequencies are expressed by
	\[
	\begin{aligned}
	\omega_{3,1}(\delta)& = v_{r}\sqrt{\frac{3\,{\Xi_3}- 3 \sqrt{{\,\Xi_3^2-4{{r_{1}}\,{r_{2}}\,r_{3}^2}\,{(r_2-r_3)\big(r_1^3-r_2^3\big)}}}}{2\,{(r_2-r_3)\big(r_1^3-r_2^3\big)r_3^2}}}\delta^{\frac{1}{2}}\\
	&\quad -\i \frac{  {3\(\,-{ \Xi_3}+\sqrt{ \Xi_3^2-4{{r_{1}}\,{r_{2}}\,r_{3}^2}\,{(r_2-r_3)\big(r_1^3-r_2^3\big)}}\)}+ {6{r_{2}}\big(r_1^3-r_2^3\big)}}{ 4\tau {\big(r_1^3-r_2^3\big)}
		\sqrt{\Xi_3^2-{4{r_{1}}\,{r_{2}}\,r_{3}^2(r_2-r_3)\big(r_1^3-r_2^3\big)}   }}r_1^2 v_{r}\delta +O(\delta^{\frac{3}{2}}),
	\end{aligned}
	\]
	\[
	\begin{aligned}
	\omega_{3,2}(\delta)& = v_{r}\sqrt{\frac{3\,{\Xi_3} + 3 \sqrt{{\,\Xi_3^2-4{{r_{1}}\,{r_{2}}\,r_{3}^2}\,{(r_2-r_3)\big(r_1^3-r_2^3\big)}}}}{2\,{(r_2-r_3)\big(r_1^3-r_2^3\big)r_3^2}}}\delta^{\frac{1}{2}}\\
	&\quad -\i \frac{  {3\(\,{ \Xi_3}+\sqrt{ \Xi_3^2-4{{r_{1}}\,{r_{2}}\,r_{3}^2}\,{(r_2-r_3)\big(r_1^3-r_2^3\big)}}\)}- {6{r_{2}}\big(r_1^3-r_2^3\big)}}{ 4\tau {\big(r_1^3-r_2^3\big)}
		\sqrt{\Xi_3^2-{4{r_{1}}\,{r_{2}}\,r_{3}^2(r_2-r_3)\big(r_1^3-r_2^3\big)}   }}r_1^2 v_{r}\delta +O(\delta^{\frac{3}{2}}),
	\end{aligned}
	\]

	where $\Xi_3 =  	
	r_{2}\big(r_{1}^3-r_{2}^3+r_{3}^3\big)+r_{1}r_{3}^2(r_{2}-r_{3});$
	
	$\bf(ii)$ for $N=4$, as shown in Figure \ref{fig33} (b), two subwavelength resonant frequencies are expressed by
	\begin{equation}\label{4Lw1}
	\begin{aligned}
	\omega_{4,1}(\delta)& = v_{r}\sqrt{\frac{3\Xi_4- 3 \sqrt{\Xi_4^2-4r_{1}r_{2}r_{3} {\big(r_1^3-r_2^3\big)\big(r_3^3-r_4^3\big)\left(r_{2}-r_{3}\right)}}}{2{\big(r_1^3-r_2^3\big)\big(r_3^3-r_4^3\big)\left(r_{2}-r_{3}\right)}}}\delta^{\frac{1}{2}}\\
	&\quad -\i \frac{3 {\(-\Xi_4+  \sqrt{\Xi_4^2-4r_{1}r_{2}r_{3} {\big(r_1^3-r_2^3\big)\big(r_3^3-r_4^3\big)\left(r_{2}-r_{3}\right)}}\)}+{6r_2r_3\big(r_1^3-r_2^3\big)}}{ 4{\tau\big(r_1^3-r_2^3\big)}\sqrt{{\Xi_4}^2-{4r_1r_2r_3\big(r_1^3-r_2^3\big)\big(r_3^3-r_4^3\big)\left(r_{2}-r_{3}\right)}}}{v_{r}}r_1^2  \delta + O(\delta^{\frac{3}{2}}),
	\end{aligned}
	\end{equation}
	\begin{equation}\label{4Lw2}
	\begin{aligned}
	\omega_{4,2}(\delta) &= v_{r}\sqrt{\frac{3\Xi_4+ 3 \sqrt{\Xi_4^2-4r_{1}r_{2}r_{3} {\big(r_1^3-r_2^3\big)\big(r_3^3-r_4^3\big)\left(r_{2}-r_{3}\right)}}}{2{\big(r_1^3-r_2^3\big)\big(r_3^3-r_4^3\big)\left(r_{2}-r_{3}\right)}}}\delta^{\frac{1}{2}}\\
	&\quad -\i \frac{3 {\(\Xi_4+  \sqrt{\Xi_4^2-4r_{1}r_{2}r_{3} {\big(r_1^3-r_2^3\big)\big(r_3^3-r_4^3\big)\left(r_{2}-r_{3}\right)}}\)}-{6r_2r_3\big(r_1^3-r_2^3\big)}}{ 4{\tau\big(r_1^3-r_2^3\big)}\sqrt{{\Xi_4}^2-{4r_1r_2r_3\big(r_1^3-r_2^3\big)\big(r_3^3-r_4^3\big)\left(r_{2}-r_{3}\right)}}}{v_{r}}r_1^2 \delta + O(\delta^{\frac{3}{2}}),
	\end{aligned}
	\end{equation}
	where $\Xi_4 = r_{2}r_{3}\big(r_{1}^3-r_{2}^3+r_{3}^3-r_{4}^3\big)+r_1(r_2-r_3)\big(r_3^3-r_4^3\big)$.
\end{thm}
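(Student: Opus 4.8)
The plan is to turn Theorem~\ref{dual_reson_frequency} into an explicit root-finding problem for the single scalar equation $\det\!\big(\bm{A}_N(\omega,\delta)\big)=0$ with $N\in\{3,4\}$, the monopolar ($n=0$) matrix of \eqref{MLCBM}, and then to extract the two branches of solutions by an asymptotic ansatz in powers of $\delta^{1/2}$. By Theorem~\ref{thmres} we already know $\bm{A}_N(\omega,\delta)$ has exactly $N_\rmr=2$ characteristic values with positive real part, each behaving like $\omega=O(\delta^{1/2})\to0$; the content of the theorem is the first two coefficients of each branch, so the work is computational rather than conceptual.

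First I would insert the small-argument expansions \eqref{besssmall}--\eqref{hankelsmall}, together with $j_0'(t)=-\tfrac{t}{3}+\tfrac{t^3}{30}+O(t^5)$ and $h^{(1)\prime}_0(t)=-\tfrac{t}{3}+\tfrac{t^3}{30}+\i\big(\tfrac{1}{t^2}+\tfrac12-\tfrac{t^2}{8}\big)+O(t^4)$, into the $2\times2$ blocks $M_{i,0},R_{i,0},L_{i,0}$ of \eqref{contrastselfinteract2}--\eqref{contrastinteract2}, using $k_\rmr r_i=\tau k r_i=\tau\omega r_i/v$. Because the entries live on very different scales (the terms carrying $h^{(1)}_0$, $h^{(1)\prime}_0$ have negative powers of $\omega$, the rest are $O(1)$ or $O(\omega)$), a preliminary row/column rescaling of $\bm{A}_N$ is in order so that its limit as $\omega,\delta\to0$ is a constant matrix with an $N_\rmr$-dimensional kernel, in accordance with Lemma~\ref{A00properties} restricted to the monopolar mode; this is what makes the low-order cancellations visible. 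I would then compute $\det\bm{A}_N$ exploiting the banded, highly sparse structure: each off-diagonal block $R_{i,0}$, $L_{i,0}$ has only one nonzero column, hence rank one, so block Schur-complement elimination (peeling off $\Gamma_1$, then $\Gamma_2$, $\dots$) collapses the $2N\times2N$ determinant into a short scalar recursion of continued-fraction type, which keeps the bookkeeping manageable for the $6\times6$ ($N=3$) and $8\times8$ ($N=4$) cases.

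Expanding the resulting scalar quantity in $\omega$ while treating $\delta$ as an independent small parameter, one finds --- as in the single-resonator computation of Theorem~\ref{single_reson_frequency} --- that the dominant balance is $\delta=O(\omega^2)$, so $\omega=O(\delta^{1/2})$. Substituting $\omega_{N,j}(\delta)=a_1\delta^{1/2}+a_2\delta+O(\delta^{3/2})$ into $\det\bm{A}_N=0$ and matching the coefficient of $\delta$ --- the first nontrivial order --- yields a quadratic in $a_1^2$,
\[
A_N\,\big(a_1^{2}\big)^{2}\;-\;3\,\Xi_N\,v_\rmr^{2}\,a_1^{2}\;+\;C_N\,v_\rmr^{4}\;=\;0,
\]
where $A_N,C_N$ are explicit monomials in the radii ($A_3=(r_2-r_3)(r_1^3-r_2^3)r_3^2$, $C_3=9r_1r_2$; $A_4=(r_1^3-r_2^3)(r_3^3-r_4^3)(r_2-r_3)$, $C_4=9r_1r_2r_3$) and $\Xi_3,\Xi_4$ are the symmetric polynomials stated in the theorem. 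Its two roots, via the quadratic formula, are precisely the $\Xi_N\pm\sqrt{\Xi_N^2-4(\cdots)}$ expressions (up to an explicit overall factor), giving the leading coefficients of the two branches $\omega_{N,1}$ (the smaller root) and $\omega_{N,2}$ (the larger). Matching next the coefficient of $\delta^{3/2}$, into which the purely imaginary parts of $h^{(1)}_0$ and $h^{(1)\prime}_0$ feed, produces a linear equation for $a_2$ in terms of $a_1$, $\tau$ and the radii, which reproduces the $-\i(\cdots)\,\delta$ correction terms; the two branches inherit different values of $a_2$ through their different $a_1$, which is why the imaginary parts of $\omega_{N,1}$ and $\omega_{N,2}$ differ. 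I would then specialize the symbolic result to $N=3,4$, simplify to the displayed formulas, and cross-check against limiting cases: as $r_3\to0$ the three-layer quadratic degenerates and its finite root recovers the two-layer frequency \eqref{shellfreqc}, and as $r_4\to0$ the four-layer quadratic reduces exactly to the three-layer one.

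The main obstacle is the size and delicacy of the determinant expansion: the naive leading terms of $\det\bm{A}_N$ cancel --- a reflection of the fact that $\omega=0$ is already a characteristic value of $\bm{A}_N(\omega,0)$ of multiplicity $2N_\rmr$ --- so several orders of the double expansion in $\omega$ and $\delta$ past the first must be tracked, and one must correctly identify which mixed monomials $\omega^{p}\delta^{q}$ survive and, in particular, pin down the exact symmetric combinations $\Xi_3,\Xi_4$ and the coefficients $A_N,C_N$. Carrying out the elimination so that the rank-one coupling is used at every step, together with the prenormalization of $\bm{A}_N$, is what turns this from a formal manipulation into a rigorous derivation; a symbolic-algebra verification of the final simplification is advisable.
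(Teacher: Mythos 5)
Your proposal is correct and follows essentially the same route as the paper: expand $\det(\bm{A}_N(\omega,\delta))=0$ for $N=3,4$ via the small-argument Bessel/Hankel expansions, identify the balance $\delta=O(\omega^2)$, insert $\omega=a_1\delta^{1/2}+a_2\delta+O(\delta^{3/2})$, and match successive orders to get a quadratic in $a_1^2$ and a linear equation for $a_2$; indeed your quadratic $A_N(a_1^2)^2-3\Xi_N v_\rmr^2 a_1^2+C_N v_\rmr^4=0$ is exactly the paper's $c_2a_1^4-c_3a_1^2+c_1=0$ after clearing the common factor, and your limiting-case checks ($r_3\to0$, $r_4\to0$) are valid. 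The only additions (Schur-complement evaluation of the banded determinant, prenormalization) are bookkeeping devices; just make sure to also record, as the paper does, that the discriminant $c_3^2-4c_1c_2>0$ and both roots $a_1^2>0$, which justifies the real positive leading order of both branches.
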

\begin{proof}[\bf Proof]
	$\bf(i)$
Through straightforward calculations (but rather lengthy and tedious) and from the asymptotic expansion for $\omega\ll 1$
and $\delta\ll 1$,  one has that from $\det (\bm{A}_3)=0$,
\[
\begin{aligned}
0& = -\i \frac{\tau^5 v_{r}^6}{r_{1}^2 r_{2}^2 r_{3}^2}\delta ^2
-\mathrm{i}\frac{(r_2-r_3)\big(r_1^3-r_2^3\big)\tau ^5v_{r}^2}{9r_{1}^3r_{2}^3} \omega^4+\mathrm{i}\frac{ \left(	
	r_{2}\big(r_{1}^3-r_{2}^3+r_{3}^3\big)+r_{1}r_{3}^2(r_{2}-r_{3})\right)\tau ^5v^4_{r}}{3r_{1}^3 r_{2}^3 r_{3}^2} \omega^2\delta \\
&\quad- \frac{ \big(r_{1}^3-r_{2}^3+r_{3}^3\big)\tau^4 v_{r}^3}{3r_{1}^2r_{2}^2r_{3}^2}\omega^3 \delta + \frac{\big(r_1^3-r_2^3\big)\left(r_{2}-r_{3}\right)\tau ^4v_{r}}{9r_{1}^2r_{2}^3}\omega^5+O(\omega^6) +O(\delta^3)\\
&:=-\mathrm{i}c_1\delta^2-\mathrm{i}c_2 \omega^4 +\mathrm{i}c_3 \omega^2 \delta-c_4\omega^3 \delta +c_5\omega^5+O(\omega^6)  +O(\delta^3),
\end{aligned}
\]
where
\[
c_1 = \frac{\,\tau^5 \,v_{r}^6\,}{r_{1}^2\,r_{2}^2\,r_{3}^2},\quad c_2 = \frac{(r_2-r_3)\big(r_1^3-r_2^3\big)\tau ^5{v_{r}}^2}{9r_{1}^3r_{2}^3},
\]
\[
c_3 = \frac{\( 	
	r_{2}\big(r_{1}^3-r_{2}^3+r_{3}^3\big)+r_{1}\,r_{3}^2(\,r_{2}-\,r_{3})\)\tau ^5\,v^4_{r}}{3r_{1}^3r_{2}^3r_{3}^2}:= \frac{\Xi_3\tau ^5\,v^4_{r}}{3r_{1}^3r_{2}^3r_{3}^2},
\]
\[
 c_4 = \frac{ \big(r_{1}^3-r_{2}^3+r_{3}^3\big)\tau^4 v_{r}^3}{3\,r_{1}^2\,r_{2}^2\,r_{3}^2},\;\mbox{ and }\; c_5 = \frac{\big(r_1^3-r_2^3\big)\left(r_{2}-r_{3}\right)\tau ^4v_{r}}{9r_{1}^2r_{2}^3}.
\]
It follows from $r_1>r_2>r_3$ that $c_j>0$, for $j = 1,2,3,4,5.$
It is clear that $\delta = O(\omega^2)$, and thus $\omega_3(\delta) = O(\sqrt{\delta})$.
We write the $\omega_3(\delta)$ in the following asymptotic expansion
$$
\omega_3(\delta) = a_1 \delta^{\frac{1}{2}} + a_2 \delta + O (\delta^{\frac{3}{2}}).
$$
We get
\begin{equation}
\begin{aligned}
&-\i c_1\delta^2- \i c_2 \(a_1 \delta^{\frac{1}{2}} + a_2 \delta + O (\delta^{\frac{3}{2}})\)^4 + \i c_3 \(a_1 \delta^{\frac{1}{2}} + a_2 \delta + O (\delta^{\frac{3}{2}})\)^2 \delta\\
&-c_4\(a_1 \delta^{\frac{1}{2}} + a_2 \delta + O (\delta^{\frac{3}{2}})\)^3 \delta +c_5\(a_1 \delta^{\frac{1}{2}} + a_2 \delta + O (\delta^{\frac{3}{2}})\)^5   +O(\delta^3)=0.
\end{aligned}
\end{equation}
From the coefficients of the $\delta^2$ and $\delta^{\frac{5}{2}}$ terms, we obtain
\begin{equation}\label{realpart}
 -c_1-c_2a_1^4+c_3a_1^2 =0,
\end{equation}
\begin{equation}\label{imagpart}
 -4\i c_2a_1^3a_2+2\i c_3a_1a_2-c_4a_1^3 +c_5a_1^5=0.
\end{equation}
By \eqref{realpart}, we have
\begin{equation}\label{realsol}
\begin{aligned}
a_1^2 &=   \frac{-c_3\pm\sqrt{c_3^2-4c_1c_2}}{-2c_2}\\
&= 3v^2_{r}  \frac{\,{ \Xi_3}\mp\sqrt{ \Xi_3^2-4{{r_{1}}\,{r_{2}}\,r_{3}^2}\,{(r_2-r_3)\big(r_1^3-r_2^3\big)}}}{2{(r_2-r_3)\big(r_1^3-r_2^3\big)r_3^2}}>0,
\end{aligned}
\end{equation}
where we used the fact that
\[
c_3^2-4c_1c_2=\tau^{10}v_\rmr^8\frac{\,\left(	
	r_{2}\big(r_{1}^3-r_{2}^3\big)\,-r_{1}\,r_{3}^2(\,r_{2}-\,r_{3})+r_{2}\,r_{3}^3\right)^2+4 r_{1}\,r_{2}\,{r_{3}}^5(\,r_{2}-\,r_{3})}{9{r_{1}}^6\;{r_{2}}^6\,{r_{3}}^4}>0.
\]
On the other hand, by \eqref{realpart}--\eqref{realsol}, we can obtain that
\begin{equation}
\begin{aligned}
a_2&= -\i \frac{c_4a_1^2-c_5a_1^4}{2c_3-4c_2a_1^2}
= -\i \frac{\(c_2c_4-c_3c_5\)a_1^2+c_1c_5}{\pm 2c_2\sqrt{c_3^2-4c_1c_2}}\\
&= -\i \frac{\(-r_1^2r_{3}^2\big(r_1^3-r_2^3\big)\,\left(r_{2}-r_{3}\right)^2\)a_1^2+ {3r_{1}^2\,{r_{2}}{v_{r}}^2\big(r_1^3-r_2^3\big)\,\left(r_{2}-r_{3}\right)}}{\pm 2\tau {v_{r}}{(r_2-r_3)(r_1^3-r_2^3)}	\sqrt{\Xi_3^2-{4{r_{1}}\,{r_{2}}\,r_{3}^2(r_2-r_3)(r_1^3-r_2^3)}   }}\\
&= - \i \frac{  {3\(\,-{ \Xi_3}\pm\sqrt{ \Xi_3^2-4{{r_{1}}\,{r_{2}}\,r_{3}^2}\,{(r_2-r_3)(r_1^3-r_2^3)}}\)}+ {6{r_{2}}\big(r_1^3-r_2^3\big)}}{\pm 4\tau {(r_1^3-r_2^3)}
	\sqrt{\Xi_3^2-{4{r_{1}}\,{r_{2}}\,r_{3}^2(r_2-r_3)(r_1^3-r_2^3)}   }}r_1^2 v_{r},
\end{aligned}
\end{equation}
which completes the proof for the case $N=3$.

$\bf(ii)$ In a similar manner, by rather lengthy and tedious calculations, it follows from $\det(\bm{A}_4)=0$, for for $\omega\ll 1$
and $\delta\ll 1$, that
\[
\begin{aligned}
0 & = \frac{\tau ^6v_{r}^8}{r_{1}^2\,r_{2}^2\,r_{3}^2\,r_{4}^2}\delta ^2 +\frac{\big(r_1^3-r_2^3\big)\big(r_3^3-r_4^3\big)\left(r_{2}-r_{3}\right)\tau ^6v_{r}^4}{9r_{1}^3r_{2}^3r_{3}^3r_{4}^2}
\omega^4\\
&\quad -\frac{\(r_{2}\,r_{3}\big(r_{1}^3-r_{2}^3+r_{3}^3-r_{4}^3\big)+r_1(r_2-r_3)\big(r_3^3-r_4^3\big)\)\tau ^6\,v_{r}^6}{3\,r_{1}^3\,r_{2}^3\,r_{3}^3\,r_{4}^2}\omega^2\delta\\
&\quad -\mathrm{i}\frac{\big(r_{1}^3-r_{2}^3+r_{3}^3-r_{4}^3\big)\tau ^5v_{r}^5}{3\,r_{1}^2\,r_{2}^2\,r_{3}^2\,r_{4}^2}
\omega^3\delta +\i  \frac{\big(r_1^3-r_2^3\big)\big(r_3^3-r_4^3\big)\,\left(r_{2}-r_{3}\right)\tau ^5{v_{r}}^3}{9\,r_{1}^2\,r_{2}^3\,r_{3}^3\,r_{4}^2}\omega^5 +O(\omega^6)+O(\delta^3)\\
&:=c_1\delta^2+c_2 \omega^4 -c_3 \omega^2 \delta-\mathrm{i}c_4\omega^3 \delta +\i c_5 \omega^5+O(\omega^6)  +O(\delta^3),
\end{aligned}
\]
where
\[
c_1 =  \frac{\,\tau ^6\,v_{r}^8}{r_{1}^2\,r_{2}^2\,r_{3}^2\,r_{4}^2},\quad c_2 = \frac{\big(r_1^3-r_2^3\big)\,\big(r_3^3-r_4^3\big)\,\left(r_{2}-r_{3}\right)\tau ^6v_{r}^4}{9\,r_{1}^3\,r_{2}^3\,r_{3}^3\,r_{4}^2},
\]
\[
c_3 = \frac{r_{2}\,r_{3}\big(r_{1}^3-r_{2}^3+r_{3}^3-r_{4}^3\big)+r_1(r_2-r_3)\big(r_3^3-r_4^3\big)}{3\,r_{1}^3\,r_{2}^3\,r_{3}^3\,r_{4}^2}\tau ^6v_{r}^6:=\frac{\Xi_4 \tau ^6v_{r}^6}{3\,r_{1}^3\,r_{2}^3\,r_{3}^3\,r_{4}^2},
\]
\[
c_4 = \frac{\big(r_{1}^3-r_{2}^3+r_{3}^3-r_{4}^3\big)\tau ^5v_{r}^5}{3\,r_{1}^2\,r_{2}^2\,r_{3}^2\,r_{4}^2}, \; \mbox{ and }\;
c_5 = \frac{\big(r_1^3-r_2^3\big)\big(r_3^3-r_4^3\big)\left(r_{2}-r_{3}\right)\tau ^5{v_{r}}^3}{9r_{1}^2r_{2}^3r_{3}^3r_{4}^2}.
\]
From $r_1>r_2>r_3>r_4$, we also have that $c_j>0$, for $j = 1,2,3,4,5.$
It is also clear that $\delta = O(\omega^2)$, and thus $\omega_4(\delta) = O(\sqrt{\delta})$.
We write the $\omega_4(\delta)$ in the following asymptotic expansion
$$
\omega_4(\delta) = a_1 \delta^{\frac{1}{2}} + a_2 \delta + O (\delta^{\frac{3}{2}}).
$$
We get
\begin{equation}
\begin{aligned}
&c_1\delta^2+c_2 \(a_1 \delta^{\frac{1}{2}} + a_2 \delta + O (\delta^{\frac{3}{2}})\)^4 - c_3 \(a_1 \delta^{\frac{1}{2}} + a_2 \delta + O (\delta^{\frac{3}{2}})\)^2 \delta\\
&-\i c_4\(a_1 \delta^{\frac{1}{2}} + a_2 \delta + O (\delta^{\frac{3}{2}})\)^3 \delta +\i c_5\(a_1 \delta^{\frac{1}{2}} + a_2 \delta + O (\delta^{\frac{3}{2}})\)^5  +O(\delta^3)=0.
\end{aligned}
\end{equation}
From the coefficients of the $\delta^2$ and $\delta^{\frac{5}{2}}$ terms, we obtain
\[
\begin{aligned}
& c_1^2+c_2a_1^4-c_3a_1^2 =0, \\
& 4c_2a_1^3a_2-2c_3a_1a_2-\i c_4a_1^3 + \i c_5a_1^5=0.
\end{aligned}
\]
Hence, we have
\begin{equation}
\begin{aligned}
a_1^2 &=   \frac{c_3\pm\sqrt{c_3^2-4c_1c_2}}{2c_2}\\
&=   {v_{r}}^2\frac{3\Xi_4\pm 3 \sqrt{\Xi_4^2-4r_{1}r_{2}r_{3} {\big(r_1^3-r_2^3\big)\big(r_3^3-r_4^3\big)\left(r_{2}-r_{3}\right)}}}{2{\big(r_1^3-r_2^3\big)\big(r_3^3-r_4^3\big)\left(r_{2}-r_{3}\right)}}>0,
\end{aligned}
\end{equation}
where we used the fact that
\[
c_3^2-4c_1c_2 = \tau ^{12}\,{v_{r}}^{12}\frac{\(r_{2}\,r_{3}\big(r_{1}^3-r_{2}^3+r_{3}^3-r_{4}^3\big)-r_1(r_2-r_3)(r_3^3-r_4^3)\)^2+4r_1 r_2r_3(r_2-r_3)(r_3^3-r_4^3)^2}{9\,{r_{1}}^6\,{r_{2}}^6\,{r_{3}}^6\,{r_{4}}^4}>0.
\]
Moreover, we obtain
\begin{equation}
\begin{aligned}
a_2&= -\i \frac{c_4a_1^2-c_5a_1^4}{2c_3-4c_2a_1^2}
= -\i \frac{\(c_2c_4-c_3c_5\)a_1^2+c_1c_5}{\mp 2c_2\sqrt{c_3^2-4c_1c_2}}\\
&= -\i \frac{\(-r_1^2\big(r_1^3-r_2^3\big)\,\big(r_3^3-r_4^3\big)^2\,\left(r_{2}-r_{3}\right)^2\)a_1^2+{3r_1^2r_2r_3{v_{r}}^{2}\big(r_1^3-r_2^3\big)\,\big(r_3^3-r_4^3\big)\,\left(r_{2}-r_{3}\right)}}{\mp 2{\tau\,{v_{r}}\big(r_1^3-r_2^3\big)\,\big(r_3^3-r_4^3\big)\,\left(r_{2}-r_{3}\right)}\sqrt{{\Xi_4}^2-{4r_1r_2r_3\big(r_1^3-r_2^3\big)\,\big(r_3^3-r_4^3\big)\,\left(r_{2}-r_{3}\right)}}}\\
&= -\i \frac{3 {\(-\Xi_4\mp  \sqrt{\Xi_4^2-4r_{1}r_{2}r_{3} {\,\big(r_1^3-r_2^3\big)\,\big(r_3^3-r_4^3\big)\,\left(r_{2}-r_{3}\right)}}\)}+{6r_2r_3\big(r_1^3-r_2^3\big)}}{\mp 4{\tau\big(r_1^3-r_2^3\big)}\sqrt{{\Xi_4}^2-{4r_1r_2r_3\big(r_1^3-r_2^3\big)\,\big(r_3^3-r_4^3\big)\,\left(r_{2}-r_{3}\right)}}}{v_{r}}r_1^2.
\end{aligned}
\end{equation}
The proof is complete.
\end{proof}

\begin{rem} 
	(i) In the proof of Theorems \ref{single_reson_frequency}--\ref{dual_reson_frequency}, we can see that the $-\overline{\omega_{N,i}}$, $i =1,\ldots,N_\rmr$, for $N=1,\ldots,4$, is also a subwavelength resonant frequency, which is consistent  with Lemma \ref{symmetric_res}.
	The reason for choosing the real part to be positive is to give a physical meaning to a complex subwavelength resonant frequency.
	The positive real part represents the frequency of oscillation and the negative  imaginary part  corresponds to the fact that energy is lost to the far field with the magnitude describing the rate of attenuation.
	
	(ii) It is known that as the CVR of the shell increases, the  a larger blueshift occurs, i.e., if $\frac{r_1}{r_1^3-r_2^3}\leq \frac{r_3}{r_3^3-r_4^3}$ there holds $\Re \omega_{2,\rm{OS}}\leq \Re \omega_{2,\rm{IS}}$, where $\omega_{2,\rm{OS}}$ and $\omega_{2,\rm{IS}}$ are the subwavelength resonant frequencies of outer and inner shell, respectively. The subwavelength resonant response of the dual-resonator models can be understood as an interaction and hybridization of the response of the two individual shells, which results in mode splitting. It can be seen that $\Re \omega_{4,1}<\Re \omega_{2,\rm{OS}}\leq \Re \omega_{2,\rm{IS}}<\Re \omega_{4,2}$.

\end{rem}

\section{Numerical computations}\label{sec5}

In this section, we conduct numerical simulations to validate our theoretical results from the previous sections. We begin by analyzing mode splitting in multi-layer concentric spheres. Furthermore, it is essential to examine the eigenmodes $u_j$
associated with each eigenfrequency $\omega_{N,j}$.



\subsection{Mode splitting}\label{subsecMS}
In this subsection, we compute the eigenfrequencies. To validate the eigenfrequency formulas in Theorems \ref{single_reson_frequency}--\ref{dual_reson_frequency}, we first compute the characteristic value $\omega_N^{(c)}$ of $\bm{A}_N(\omega, \delta)$ in \eqref{MLCBM}. We compare $\omega_N^{(c)}$ with the exact eigenfrequency formulas, denoted by $\omega_N^{(e)}$, over a range of appropriate values of $\delta$ to assess the accuracy of the formulas.

To conduct the analysis within the appropriate regime, as described in Section \ref{sec3}, we set $\rho_\rmr = \kappa_\rmr = 1$ and $\rho = \kappa = 1/\delta$, where $\delta \in \{1/100,1/1000,1/6000,1/10000\}$.
Setting $ f(\omega) = \det(\bm{A}_N(\omega, \delta))$, we have that calculating $\omega_N^{(c)}$ is equivalent to determining the following complex root-finding problem
\begin{equation}\label{CRFP}
\omega_N^{(c)} = \min\limits_{\omega \in \mathbb{C}}\{\omega| \ f(\omega) = 0\},
\end{equation}
which can be calculated by using Muller's method \cite{AK_book2018}.
We consider the case that radius of layers are equidistance.  For $N$-layer structure, set
\begin{equation}\label{str01}
r_i=(N-i+1), \quad i=1, 2, \ldots N.
\end{equation}
The eigenfrequencies $\omega_4^{(c)}$ and $\omega_4^{(e)}$ for four-layer structures, along with the corresponding  total relative errors for specific values of $\delta$, are presented  in Table \ref{table-results}.
It is evident that the total relative error decreases as $\delta$ becomes smaller, thereby confirming the accuracy of the formulas   \eqref{4Lw1}--\eqref{4Lw2}.
In particular, we observe  that for $\delta = 1/6000$,  the difference between $\omega^{(c)}$ and $\omega^{(e)}$ is negligible with a total relative error of only $0.0229\%$.

By applying Muller's method \cite{AK_book2018} to solve \eqref{CRFP}, we can find, for each fixed $\delta > 0$, that the $N$-layer structure composed of $N_\rmr$ subwavelength nested resonators possesses  $N_\rmr$ eigenfrequencies with positive real parts and negative imaginary part. The eigenfrequencies of 50-layer subwavelength nested resonators are shown in Figure \ref{50layerd1_6000}. It is observed that these eigenfrequencies lie in the lower half of the complex plane and exhibit symmetry about the imaginary axis. Notably, the imaginary part of the lowest frequency is greater than that of the other frequencies. This may have practical implications for applications such as low-pass filters in acoustic wave processing.

\afterpage{\setlength{\tabcolsep}{12pt}
\begin{table}
	\centering
	\begin{tabular}{cccc}
		\toprule
		$\delta$ & $\omega_4^{(c)}$ & $\omega_4^{(e)}$ & \text{Total relative error} \\
		\midrule
		$1/100$ & 0.0513551 -0.0052161i   & 0.0517470 -0.0052491i    & 1.3691\% \\
		& 0.1754137 -0.0012548i   & 0.1764784 -0.0012374i    &  \\
		$1/1000$ & 0.0163513 -0.0005246i   & 0.0163638 -0.0005249i   & 0.1371\% \\
		& 0.0557735 -0.0001239i & 0.0558074 -0.0001237i &  \\
		$1/6000$ & 0.0066797 -0.0000875i    & 0.0066805 -0.0000875i    & 0.0229\% \\
		& 0.0227810 -0.0000206i  & 0.0227833 -0.0000206i  &  \\
		$1/10000$ & 0.0051743 -0.0000525i &0.0051747 -0.0000525i& 0.0137\% \\
		& 0.0176468 -0.0000124i&0.0176478 -0.0000124i&  \\
		\bottomrule
	\end{tabular}
	\caption{A comparison between the characteristic value $\omega_4^{(c)}$ of $\mathcal{A}(\omega, \delta)$ and the the eigenfrequencies formulas  \eqref{4Lw1}--\eqref{4Lw2}, over several values of $\delta$.}
	\label{table-results}
\end{table}
\begin{figure}[h]
	\centering
	\includegraphics[scale=0.5]{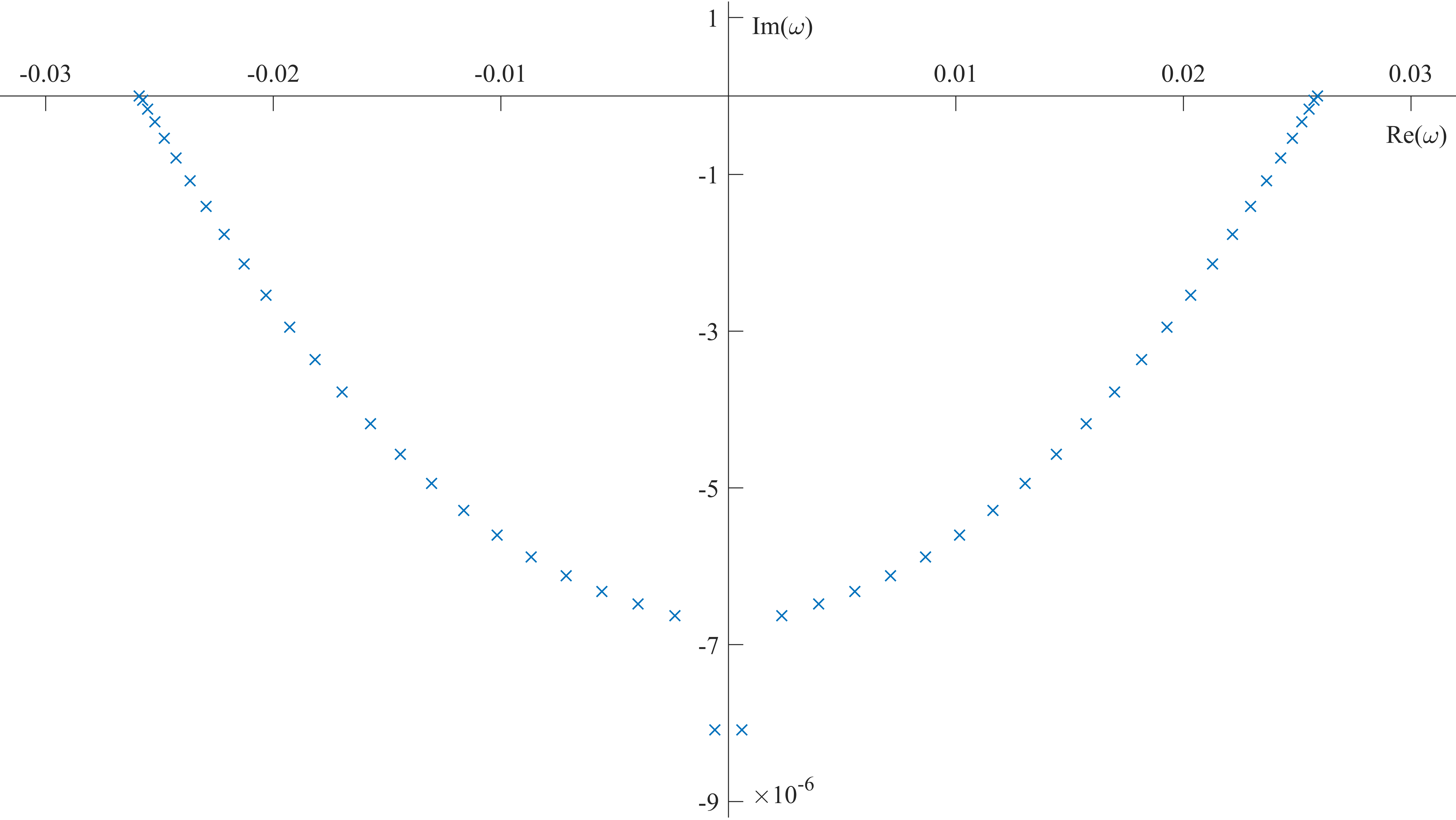}
	\caption{ The subwavelength resonant frequencies, plotted in the complex plane, of the $50$-layer structure composed of $25$ subwavelength nested resonators designed by \eqref{str01} with $\delta = 1/6000$.}\label{50layerd1_6000}
\end{figure}
}

\subsection{Resonant modes}

In subsection \ref{subsecMS} above, we see that for the $N$-layer concentric ball consisting of $N_\rmr$ subwavelength nested resonators, there exist $N_\rmr$ eigenfrequencies with positive real parts and negative imaginary parts. Next, we aim to analyze  the distribution of the eigenmodes $u_j$ associated with each eigenfrequency  $\omega_{N,j}$ (abbreviated as $\omega_j$).

For easier visualization of results, we focus on seven- or eight-layer structures consisting of quadruple-resonators.
We also consider that the radii of the layers decrease at the same scale $s$, that is,
\begin{equation}\label{str02}
r_{i+1}=sr_i, \quad i=1, 2, \ldots N-1.
\end{equation}
Let $r_1=N$ and $s=0.8$.
The four eigenfrequencies for the eight-layer concentric ball, designed using \eqref{str01}, and the seven-layer concentric ball, designed using \eqref{str02}, are shown in Figures \ref{8layer_d1_6000} and \ref{7layer_s08_6000}, respectively.
The eigenmodes have been normalized such that
$\int_{D_{\mathrm{r}}} |u_j|^2\mathrm{d}x = 1$
for each $j = 1,2,3, 4.$

The eigenfrequencies $\omega_j$ are arranged in ascending order of their real parts. The corresponding eigenmodes inherit the symmetry of the nested resonators and exhibit increasingly oscillating pattern.
Intriguingly, the phenomenon of field concentration is evident in the lower plot of Figure \ref{7layer_s08_6000}, where the gradient of the solution may become arbitrarily large as the distance between the two resonators approaches zero, which are the counterparts of the well-known
gradient blowup \cite{DKLZ_AAMM2024,ADY_MMS2020,LZ_MMS2023,DFLMMS22} in two nearly touching separated resonators. This phenomenon is a central topic in the theory of composite materials. It is also notable that the solution
remains approximately constant within each resonator.
This behavior arises because, at leading order, the solution takes the form \eqref{Helm-solution}, represented as $\mathcal{S}_{\Gamma_j} [\varphi_j]+\mathcal{S}_{\Gamma_{j+1}} [\psi_{j+1}]$ for odd $j$. According to Lemma \ref{A00properties}, this expression is constant for $\varphi_j\in \ker (-\frac{1}{2}I+ \mathcal{K}_{\Gamma_j}^{*})$ and $\psi_{j+1}=0$.

\afterpage{
\begin{figure}[htbp]
	\centering
	\includegraphics[scale=0.73]{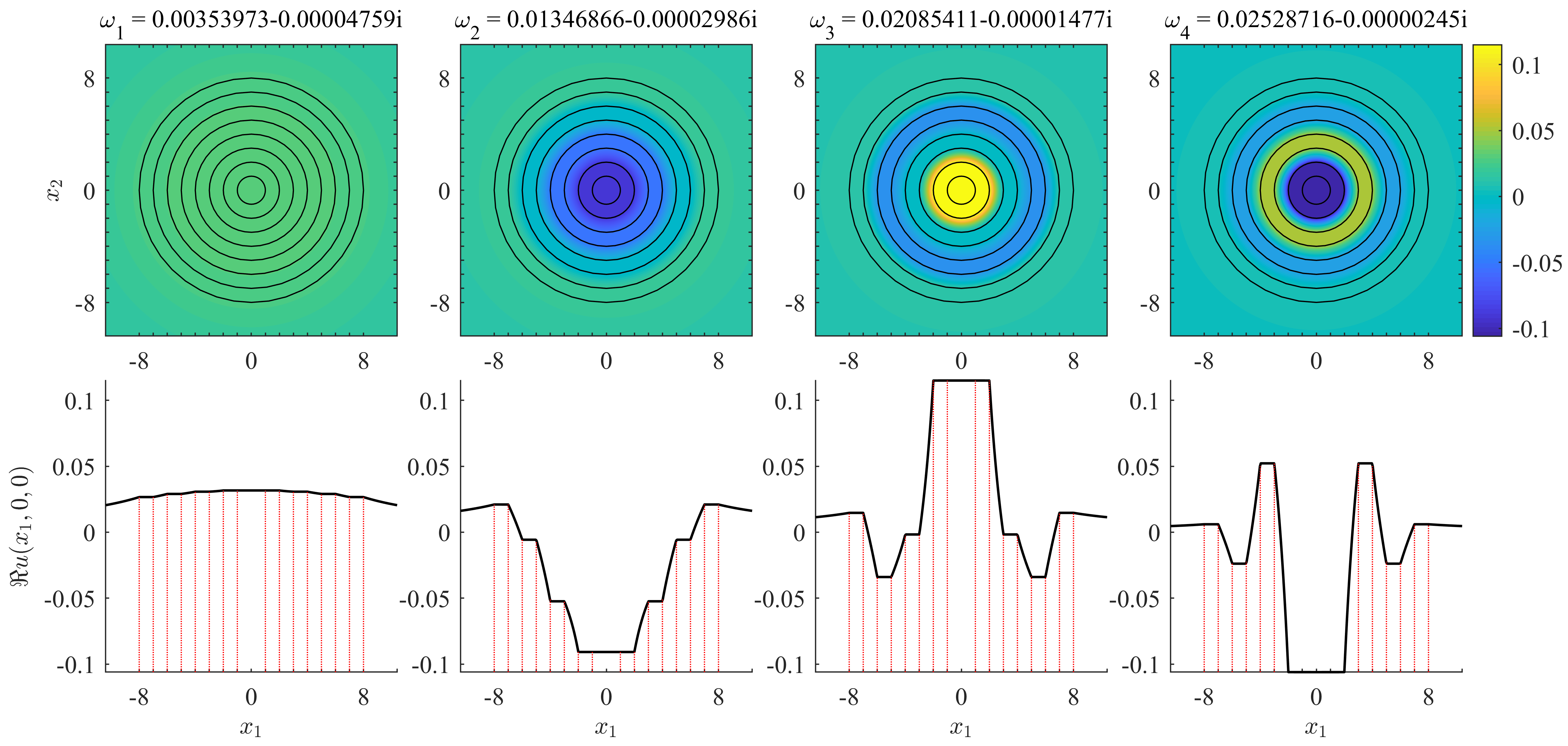}
	\caption{The acoustic pressure eigenmodes $u_1,u_2,u_3,u_4$ for the eight-layer concentric ball designed by \eqref{str01}. Each pair of plots corresponds to one of the four eigenfrequencies. The upper plot displays a contour plot of the function $\Re u_k(x_1, x_2,0)$, with the eight-layer concentric ball designed by \eqref{str01} represented as solid black lines. The lower plot shows the cross section of the upper plot, taken along the line $x_2 = 0$ (passing through the centres of the multi-layer structures). Additionally, red dotted lines represent vertical lines at the coordinates of the radius.}\label{8layer_d1_6000}
\end{figure}
\begin{figure}[htbp]
	\centering
	\includegraphics[scale=0.73]{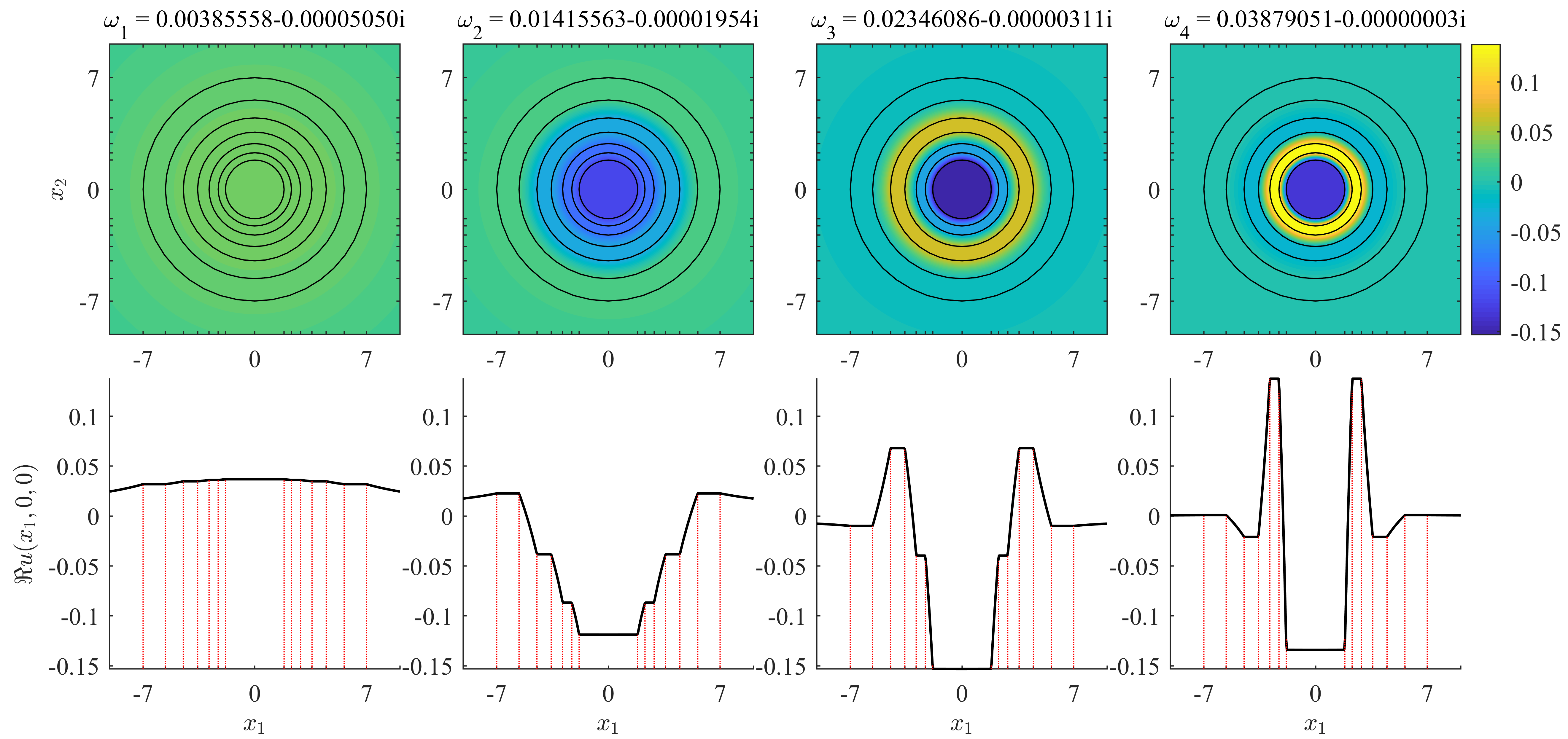}
	\caption{The acoustic pressure eigenmodes $u_1,u_2,u_3,u_4$ for the seven-layer concentric ball designed by \eqref{str02}. Each pair of plots corresponds to one of the four eigenfrequencies. The upper plot displays a contour plot of the function $\Re u_k(x_1, x_2,0)$, with the seven-layer concentric ball designed by \eqref{str02} represented as solid black lines. The lower plot shows the cross section of the upper plot, taken along the line $x_2 = 0$ (passing through the centres of the multi-layer structures). Additionally, red dotted lines represent vertical lines at the coordinates of the radius.}\label{7layer_s08_6000}
\end{figure}}

\section{Concluding remarks}\label{sec6}
In this paper, we developed a general mathematical framework to study resonant phenomenon in acoustics within multi-layer high-contrast structures. Using layer potential techniques and Gohberg-Sigal theory, we demonstrate a powerful method for analyzing metamaterials with nested geometry, revealing that the number of resonant modes increases with the number of resonators. This indicates a rich landscape of acoustic behavior that can be tailored through careful design, with significant implications for applications in manipulating  waves propagation at subwavelength scales. We mention that the idea can be extended to the two-dimensional case with some technical adjustments. Moreover,  We shall derive some similar results regarding the subwavelength resonance in multi-layer high contrast solid structures in forthcoming works. It is worth emphasizing that the approach developed in this paper can be applied to study the mode splitting of other subwavelength resonators such as multi-layer plasmonic structures.

\section*{Acknowledgment}
The work of Y. Deng was supported by NSFC-RGC Joint Research Grant No. 12161160314.
The work of H. Li was substantially supported by NSFC grant 12401561 and Research Start Fund 5333100432.
 The work of H. Liu was supported by the NSFC/RGC Joint Research Scheme, N\_CityU101/21; ANR/RGC Joint Research Scheme, A\_CityU203/19; and the Hong Kong RGC General Research Funds (projects 11311122, 11304224 and 11300821).

\begin{thebibliography}{99}
	


\bibitem{ACLJFA23}
{\sc H. Ammari, Y. Chow, and H. Liu},
{\em Quantum ergodicity and localization of plasmon resonances}, J. Funct. Anal., 285 (2023), 109976.


\bibitem{Ammari2013}
{\sc H.~Ammari, G.~Ciraolo, H.~Kang, H.~Lee, and G.~W. Milton}, {\em Spectral theory of a Neumann-Poincar{\'{e}}-type operator and analysis of cloaking due to anomalous localized resonance}, Arch. Ration. Mech. Anal., 208 (2013), 667--692.

\bibitem{AD_book2024}
{\sc  H. Ammari, and B. Davies}, {\em Metamaterial Analysis and Design: A Mathematical Treatment of Cochlea-inspired Sensors}, (De Gruyter, Berlin, 2024).

\bibitem{ADY_MMS2020}
{\sc H. Ammari, B. Davies, and S. Yu},
{\em Close-to-touching acoustic subwavelength resonators: eigenfrequency separation and gradient blow-up},
Multiscale Model. Simul., 18 (2020), 1299--1317.

\bibitem{Ammari2016}
{\sc H.~Ammari, Y.~Deng, and P.~Millien}, {\em Surface plasmon resonance of nanoparticles and applications in	imaging},  Arch. Ration. Mech. Anal., 220 (2016), 109--153.

\bibitem{AFGLZ_AIHPCAN}
{\sc H. Ammari, B. Fitzpatrick, D. Gontier, H. Lee, and H. Zhang},
{\em Minnaert resonances for acoustic waves in bubbly media},
 Ann. Inst. H. Poincar\'e C Anal. Non Lin\'eaire, 35 (2018), 1975--1998.

\bibitem{AFHLY_JDE2019}
{\sc H. Ammari, B. Fitzpatrick, E.O. Hiltunen, H. Lee, and S. Yu}, {\em Subwavelength resonances of encapsulated bubbles},  J. Differential Equations, 267 (2019), 4719--4744.

\bibitem{AFHLY_SIMA2020}
{\sc H. Ammari, B. Fitzpatrick, E.O. Hiltunen, H. Lee, and S. Yu},
{\em Honeycomb-lattice Minnaert bubbles}, SIAM J. Math. Anal., 52 (2020), 5441--5466.

\bibitem{AK_book2018}
{\sc H. Ammari, B. Fitzpatrick, H. Kang, M. Ruiz, S. Yu, and H. Zhang},  {\em Mathematical and Computational Methods in Photonics and Phononics},  (American Mathematical Society, Providence, RI, 2018).

\bibitem{AFLYZ_JDE2017}
{\sc H. Ammari, B. Fitzpatrick, H. Lee, S. Yu, and H. Zhang},
{\em Subwavelength phononic bandgap opening in bubbly media}, J. Differential Equations, 263 (2017), 5610--5629.



\bibitem{AFLYZQAM19}
{\sc H.~Ammari, B.~Fitzpatrick, H.~Lee, S.~Yu, and H.~Zhang},  {\em Double-negative acoustic metamaterials}, Quart. Appl. Math., 77 (2019), 767--791.

\bibitem{Ammari2007}
{\sc H. Ammari and H. Kang},
{\em Polarization and Moment Tensors with Applications to Inverse Problems and Effective Medium Theory}, (Springer-Verlag, New York, 2007).

\bibitem{ALLZ526} {H. Ammari, B. Li, H. Li and J. Zou}, {\em Fano resonances in all-dielectric electromagnetic metasurfaces},  {Multiscale Model. Simul.}, { 22} (2024), 476-526.

\bibitem{AmmariSIMA17}
{\sc H.~Ammari and H.~Zhang}.
{\em Effective medium theory for acoustic waves in bubbly fluids near minnaert resonant frequency}, SIAM J. Math. Anal., 49 (2017), 3252--3276.

\bibitem{BLLW_ESAIM2020}
{\sc E. Bl\aa sten, H. Li, H. Liu, and Y. Wang}, {\em Localization and geometrization in plasmon resonances and geometric structures of Neumann-Poincar\'e eigenfunctions}, ESAIM Math. Model. Numer. Anal., 54 (2020), 957--976.

\bibitem{BZ_RMI2019}
{\sc E. Bonnetier and H. Zhang},
{\em Characterization of the essential spectrum of the Neumann-Poincar\'e operator in 2D domains with corner via Weyl sequences},
Rev. Mat. Iberoam., 35 (2019),  925--948.

\bibitem{CMJW_SV2018}
{\sc M. Chen, D. Meng, H. Jiang,and Y. Wang},
{\em Investigation on the band gap and negative properties of
concentric ring acoustic metamaterial}, Shock Vib., 12 (2018), 1369858.

%

\bibitem{CK_book}
{\sc D. Colton and R. Kress}, {\em Inverse Acoustic and Electromagnetic Scattering Theory}, (Springer, Cham, 2013).

\bibitem{CAJASA1989}
{\sc K. Commander and A. Prosperetti}, {\em Linear pressure waves in bubbly liquids: Comparison between theory and experiments},  J. Acoust. Soc. Amer., 85 (1989), 732--746.



\bibitem{DFLMMS22}
{\sc Y.~Deng, X.~Fang, and H.~Liu}, {\em Gradient estimates for electric fields with multiscale inclusions in the quasi-static regime}, Multiscale Model. Simul., 20 (2022), 641--656.

\bibitem{DKLZ_AAMM2024}
{\sc Y. Deng,  L. Kong, H. Liu, and L. Zhu}, {\em On field concentration between nearly-touching multiscale inclusions in the quasi-static regime}, Adv. Appl. Math. Mech., 16 (2024), 1252--1276.

\bibitem{DLL242} {Y. Deng, H. Li and H. Liu}, {\em Spectral properties of Neumann-Poincar\'e operator and anomalous localized resonance in elasticity beyond quasi-static limit}, \textsl{Journal of Elasticity}, {140}(2020), 213--242.

\bibitem{DLbook2024}
{\sc Y. Deng and H. Liu}, \emph{Spectral Theory of Localized Resonances and Applications}, (Springer, Singapore, 2024).

\bibitem{DLZJMPA21}
{\sc Y. Deng, H. Liu, and G.-H. Zheng},
 {\em Mathematical analysis of plasmon resonances for curved nanorods}, J. Math. Pure Appl., 153 (2021), 248--280.


\bibitem{DSWYZ_APL2021}
{\sc H. Duan, X. Shen, E. Wang, F. Yang, X. Zhang, and Q. Yin}, {\em Acoustic multi-layer Helmholtz resonance metamaterials with multiple adjustable absorption peaks}, Appl. Phys. Lett., 118 (2021), 241904.

\bibitem{DB_IEEE2011}
{\sc A. Doinikov and A. Bouakaz}, {\em Review of shell models for contrast agent microbubbles},  IEEE Trans. Ultrason. Ferroelectr. Freq. Control, 58 (2011), 981--993.

\bibitem{dyatlov2019mathematical}
S. Dyatlov and M. Zworski, {\em Mathematical Theory of Scattering Resonances}, Grad. Stud. Math. 200,
American Mathematical Society, Providence, RI, 2019.

\bibitem{FangdengMMA23}
{\sc X. Fang and Y. Deng}, {\em On plasmon modes in multi-layer structures}, Math. Methods Appl. Sci., 46 (2023), 18075--18095.

\bibitem{FDLMMA15}
{\sc X.~Fang, Y.~Deng, and J.~Li},
{\em Plasmon resonance and heat generation in nanostructures}, Math. Methods Appl. Sci., 38 (2015), 4663--4672.


\bibitem{JKMA23}
{\sc Y.-G. Ji and H. Kang},
{\em Spectral properties of the Neumann-Poincar\'e operator on
rotationally symmetric domains},  Math. Ann., 387 (2023), 1105--1123.

\bibitem{KKLSY_JLMS2016}
{\sc H. Kang, K. Kim,  H. Lee, J. Shin, and S. Yu},
{\em Spectral properties of the Neumann-Poincar\'e operator and uniformity of estimates for the conductivity equation with complex coefficients}, J. Lond. Math. Soc.,  93 (2016),  519--545.


\bibitem{KZDF}
{\sc L. Kong, L. Zhu, Y. Deng, and X. Fang}, {\em Enlargement of the localized resonant band gap by using multi-layer structures},  J. Comput. Phys., 518 (2024), 113308.

\bibitem{KMKG_JVA2017}
{\sc A. Krushynska, M. Miniaci, V. Kouznetsova, and M. Geers},
{\em Multilayered inclusions in locally resonant metamaterials: Two-dimensional versus three-dimensional modeling},
J. Vib. Acoust. 139 (2017), 024501.

\bibitem{LWSZ_NM2011}
{\sc Y. Lai, Y. Wu, P. Sheng, and Z.  Zhang}, {\em Hybrid elastic solids}, Nature Materials, 10 (2011), 620--624.

\bibitem{LPDV_PRE2007}
{\sc H. Larabi, Y. Pennec, B. Djafari-Rouhani, and J. Vasseur},
{\em Multicoaxial cylindrical inclusions in locally resonant phononic crystals}, Phys. Rev. E, 75, (2007) 066601.

\bibitem{LVAPL09}
{\sc V.~Leroy, A.~Bretagne, M.~Fink, H.~Willaime, P.~Tabeling and A.~Tourin}, {\em Design and characterization of bubble phononic crystals}, Appl. Phys. Lett.,  95 (2009), 171904.

\bibitem{LSPS_JASM2008}
{\sc V. Leroy, A. Strybulevych, J. H. Page, and M. G. Scanlon},
{\em Sound velocity and attenuation in bubbly gels measured by
transmission experiments},
J. Acoust. Soc. Amer., 123, (2008), 1931--1940.

\bibitem{L1272} {\sc H. Li}, {\em Recent progress on the mathematical study of anomalous localized resonance in elasticity}, {Electron. Res. Arch.},  {28} (2020), 1257--1272.

\bibitem{LLPRSA18}
{\sc H.~Li and H.~Liu}, {\em On anomalous localized resonance and plasmonic cloaking beyond the quasi-static limit}, Proc. R. Soc. A, 474 (2018), 20180165.

\bibitem{LLZSIAM2022}
{\sc H.~Li,  H.~Liu, and J. Zou},
{\em Minnaert resonances for bubbles in soft elastic materials},
SIAM J. Appl. Math., 82 (2022), 119--141.

\bibitem{LL7527} {H. Li and L. Xu}, {\em Resonant modes of two hard inclusions within a soft elastic material and their stress estimate}, arXiv:2407.19769.

\bibitem{LZ_MMS2023}
{\sc H. Li and Y. Zhao}, {\em The interaction between two close-to-touching convex acoustic subwavelength resonators}, Multiscale Model. Simul., 21 (2023), 804--826.

\bibitem{LZArxiv}
{\sc H. Li and J. Zou},
{\em Mathematical theory on dipolar resonances of hard inclusions within a soft elastic material}, arXiv:2310.12861.

\bibitem{LZScience}
{\sc Z. Liu, X. Zhang, Y. Mao, Y.  Zhu, Z. Yang,  T. Chan, and P. Sheng}, {\em Locally resonant sonic
materials}, Science, 289 (2000), 1734--1736.







\bibitem{MBKPD_PNAS2016}
{\sc K.H. Matlack, A. Bauhofer, S. Kr\"odel, A. Palermo, C. Daraio}, {\em Composite 3D-printed metastructures for low-frequency and broadband vibration absorption}, Proc. Natl. Acad. Sci. USA, 113 (2016), 8386--8390.

\bibitem{Min_1933}
{\sc M. Minnaert}, {\em On musical air-bubbles and the sounds of running water},  Philos. Mag., 16 (1933), 235--248.

\bibitem{CPBE2018}
{\sc T. Ngo, A. Kashani, G. Imbalzano, K. Nguyen, and D. Hui},
{\em Additive manufacturing (3D printing): A review of materials, methods, applications and challenges}, Compos. Part B: Eng., 143 (2018), 172--196.

\bibitem{PRHN2003SCI}
{\sc E.~Prodan, C.~Radloff, N.~Halas, and P.~Nordlander},
{\em A hybridization model for the plasmon response of complex nanostructures},  Science, 302(2003), 419--422.

\bibitem{PVDD_SCR2010}
{\sc Y. Pennec, J. Vasseur, B. Djafari-Rouhani, L. Dobrzy\'nski, and P. Deymier}, {\em Two-dimensional phononic crystals: Examples and applications}, Surf. Sci. Rep., 65 (2010), 229--291.




\bibitem{SPMW_AA2023}
{\sc G. Szczepa\'nski, M.  Podle\'sna, L. Morzynski and A. W{\l}udarczyk}, {\em Investigation of the acoustic properties of a metamaterial with a multi-ging structure}, Arch. Acoust., 48 (2023), 497--507.

\bibitem{JPT2001}
{\sc J.-P. Tignol}.
 {\em {Galois' Theory of Algebraic Equations}},
(World Scientific Publishing, River Edge, NJ, 2001).

\bibitem{ZH_PRB2009}
{\sc X. Zhou and G. Hu}, {\em Analytic model of elastic metamaterials with local resonances}, Phys. Rev. B, 79 (2009), 195109.

\bibitem{YA_SIAMREV18}
{\sc S.~Yu and H.~Ammari}, {\em Plasmonic interaction between nanospheres}, SIAM Rev., 60 (2018), 356--385.

\bibitem{YA_PNAS19}
{\sc S.~Yu and H.~Ammari}, {\em Hybridization of singular plasmons via transformation optics}, Proc. Natl. Acad. Sci. USA, 116 (2019), 13785--13790.

\end {thebibliography}

\end{document}